\theoremstyle{plain}
\newtheorem{theorem}{Theorem}
\newtheorem{lemma}[theorem]{Lemma}
\newtheorem{corollary}[theorem]{Corollary}
\renewcommand{\H}{H}
\def\Tr{\mathrm{Tr}}
\def\RR{\mathbb{R}}
\def\CC{\mathbb{C}}
\DeclareMathOperator{\dom}{dom}
\providecommand{\abs}[1]{\left\lvert#1\right\rvert}
\providecommand{\norm}[1]{\left\lVert#1\right\rVert}
\begin{document}

\title{On convex optimization problems in quantum information theory}

\author{Mark W. Girard}
\email{mwgirard (at) ucalgary.ca}
\affiliation{Institute for Quantum Science and Technology, University of Calgary}
\affiliation{Department of Mathematics and Statistics, University of Calgary, 2500 University Dr NW, Calgary, Alberta T2N 1N4, Canada}
\author{Gilad Gour}
\affiliation{Institute for Quantum Science and Technology, University of Calgary}
\affiliation{Department of Mathematics and Statistics, University of Calgary, 2500 University Dr NW, Calgary, Alberta T2N 1N4, Canada}
\author{Shmuel Friedland}
\affiliation{Department of Mathematics, Statistics and Computer Science, University of Illinois at Chicago, 851 S. Morgan St, Chicago, Illinois 60607, USA}

\date{\today}

\begin{abstract}
Convex optimization problems arise naturally in quantum information theory, often in terms of minimizing a convex function over a convex subset of the space of hermitian matrices. In most cases, finding exact solutions to these problems is usually impossible. As inspired by earlier investigations into the relative entropy of entanglement [Phys. Rev. A \href{http://dx.doi.org/10.1103/PhysRevA.78.032310}{\textbf{78} 032310} (2008)], we introduce a general method to solve the converse problem rather than find explicit solutions. That is, given a matrix in a convex set, we determine a family of convex functions that are minimized at this point.  This method allows us find explicit formulae for the relative entropy of entanglement and the Rains bound, two well-known upper bounds on the distillable entanglement, and yields interesting information about these quantities, such as the fact that they coincide in the case where at least one subsystem of a multipartite state is a qubit. 
\end{abstract}

\keywords{quantum information, relative entropy, convex optimization}
\pacs{03.67.-a, 02.60.Pn, 03.65.Ud, 02.10.Yn, 03.65.Fd} 

\maketitle

\section{Introduction}
\label{sec:intro}
Convexity naturally arises in many places in quantum information theory; the sets of possible preparations, processes and measurements for quantum systems are all convex sets. Indeed, many important quantities in quantum information are defined in terms of a convex optimization problem.  In particular, entanglement is an important resource~\cite{Plenio2007,Horodecki2009} and quantifying entanglement is a problem that is often cast in terms of convex optimization problems.

Since the set of separable or unentangled states is convex, a measure of entanglement may be defined for entangled states outside of this set, given a suitable `distance' measure, as the minimum distance to a state inside. If $\mathcal{D}$ is the set of all unentangled states, a measure of entanglement for a state~$\rho$ can be given by
\begin{equation*}
 E(\rho)=\min_{\sigma\in\mathcal{D}}D(\rho\| \sigma),
\end{equation*}
where $D(\rho\| \sigma)$ is a suitable distance (though not necessarily a metric) between two states~$\rho$ and~$\sigma$~\cite{Vedral1998}. 

Perhaps the most well known of these quantities is the relative entropy of entanglement $E_\mathcal{D}$, in which the choice of distance function is given by the relative entropy $D(\rho\| \sigma)=S(\rho\| \sigma)$, defined as
\begin{equation*}
\label{eq:REE}
 S(\rho\| \sigma)=\Tr[\rho(\log\rho-\log\sigma)].
\end{equation*}
However, even in the simplest case of a system of two qubits, determining whether or not a closed formula for the relative entropy of entanglement exists is still an open problem \mbox{\cite[prob. 8]{Krueger2005}}.

Many other quantities in quantum information can be considered in terms of convex optimization problems. Defining $\H_n$ as the space of $n\times n$ hermitian matrices, these problems are usually given in terms of a convex function $f:\H_n\rightarrow[0,+\infty]$ and some convex subset~$\mathcal{C}\subset \H_n$.  Then we can ask the question, ``when does a matrix~$\sigma^\star\in\mathcal{C}$ minimize~$f$?'' That is, when does a matrix~$\sigma^\star\in\mathcal{C}$ satisfy
\begin{equation*}
f(\sigma^\star)=\min_{\sigma\in\mathcal{C}}f(\sigma)
\end{equation*}
assuming that $f(\sigma)$ is finite for at least one $\sigma\in\mathcal{C}$. Since~$f$ is a convex function, it is sufficient to show that~$\sigma^\star$ is a semi-local minimum of $f$ (see eq. \eqref{eq:maintheorem-criterion} in Theorem~\ref{thm:maintheorem}). That is, all of the directional derivatives of $f$ at $\sigma^\star$ are nonnegative.
Since the directional derivative is usually a linear function, denoted by $D_{f,\sigma^\star}:\H_n\to \RR$, 
the condition in eq.~\eqref{eq:maintheorem-criterion} reduces to the fact that $D_{f,\sigma^\star}$ defines a supporting functional of $\mathcal{C}$ at $\sigma^\star$.

In general, finding a closed analytic formula for an optimal~$\sigma^\star$ is difficult if not impossible.  Although, from the computational point of view, the complexity of finding a good approximation to $\sigma^\star$ is relatively easy, i.e. polynomial, in terms of computation of the function $f$ and the membership in $\mathcal{C}$.  Such numerical optimization for the relative entropy of entanglement, for example, has been studied in~\cite{Zinchenko2010}.

However, rather than trying to directly solve these optimization problems, in this paper we instead discuss methods to solve the converse problem.  That is, given a matrix~$\sigma^\star\in\mathcal{C}$, we instead ask, ``which functions $f$ achieve their minimum over $\mathcal{C}$ at $\sigma^\star$?'' Although this may seem trivial at first, these kinds of results can yield meaningful statements about finding closed formula for certain quantities in quantum information~\cite{Friedland2011}. 

Recent work has been done~\cite{Ishizaka2003,Miranowicz2008,Friedland2011} that employs similar methods to determine an explicit expression for the relative entropy of entanglement for certain states.  Given a state $\sigma^\star\in\mathcal{D}$, one can find all of the entangled states~$\rho$ for which $\sigma^\star$ is the \emph{closest} separable state, thus minimizing the relative entropy of entanglement.  We extend these results to find a explicit expression for the Rains bound~\cite{Rains1999}, a quantity related to the relative entropy of entanglement, and show how these results can be generalized to other functions of interest in quantum information theory. 

The remainder of this paper is outlined as follows. In section \ref{sec:maintheorem}, we present the necessary and sufficient conditions needed for a matrix~$\sigma^\star\in\mathcal{C}$ to minimize a convex function $f$ over an arbitrary convex subset $\mathcal{C}\subset\H_n$. The examples of this analysis applied to the relative entropy of entanglement and the Rains bound are presented in sections \ref{sec:REE} and \ref{sec:Rains} respectively. In section~\ref{sec:REEcompareRains}, these results are then used to prove some interesting facts about these two quantites, such as the fact that the Rains bound and the relative entropy of entanglement coincide for states in quantum systems in which at least one subsystem is a qubit. Further applications to other convex functions are contained in section \ref{sec:OtherApplications} while section \ref{sec:conclusion} concludes.

\section{Necessary and sufficient conditions for minimizing a convex function}
\label{sec:maintheorem}

We first recall some basic definitions. Let $M_n$ be the space of $n\times n$ matrices and $ \H_n$ the subset of hermitian matrices. Given an interval $I\in\mathbb{R}$, denote $H_n(I)$ as the subset of hermitian matrices whose eigenvalues are contained in $I$, where $I$ may be open, closed or half-open. We also define the subsets 
\[ \H_{n,+,1}\subset \H_{n,+}\subset \H_{n},\]
where  $ \H_{n,+}$ is the cone of positive hermitian matrices and~$\H_{n,+,1}$ consists of the positive hermitian matrices with unit trace. Note that $ \H_{n,+,1}$ coincides with the space of density matrices acting on an $n\times n$ quantum system, which may be composed of subsystems of dimension $n_1\times\cdots\times n_k=n$. Furthermore, let $H_{n,++}=H_n(0,+\infty)$ denote the set of hermitian matrices whose eigenvalues are strictly positive.
For $A,B\in\H_{n}$, we denote by $A\leq B$ when $B-A\in\H_{n,+}$ and $A< B$ when $B-A\in\H_{n,++}$.  

With the Hilbert-Schmidt inner product given by 
\[
 \langle A,B\rangle=\Tr[A^\dagger B],
\]
 the space $M_n$ becomes a Hilbert space and $ \H_n$ becomes a real Hilbert space. A linear superoperator $\Lambda:M_n\rightarrow M_n$ is said to be \emph{self-adjoint} if it is self-adjoint with respect to the Hilbert-Schmidt inner product, i.e.
 \begin{equation*}
  \Tr[\Lambda(A)^\dagger B]= \Tr[A^\dagger \Lambda(B)] \hspace{2em}\textrm{ for all }A,B\in M_n.
 \end{equation*}
  
While we will generally be using $\H_n$ as our Hilbert space, since this is the most interesting one in quantum information, the main theorem of this paper also applies to any Hilbert space $\mathcal{H}$.

Let $\mathcal{C}\subset \mathcal{H}$ be a convex set. We recall that a function 
\mbox{$f:\mathcal{C}\rightarrow \RR$} is said to be \emph{convex}~if
\begin{equation}\label{eq:maintheorem_convex}
 f\big((1-t)A+tB\big)\leq (1-t)f(A)+tf(B)
\end{equation}
for all $A,B\in \mathcal{C}$ and $t\in[0,1]$, and $f$ is \emph{concave} if $-f$ is convex. Furthermore, a convex function \mbox{$f:\mathcal{C}\rightarrow \mathcal{H}$} is said to be \emph{operator convex} if  the above relation holds as a matrix inequality.  We call $f$ strictly convex if \[ f\big((1-t)A+tB\big)< (1-t)f(A)+tf(B)\] for all $t\in (0,1)$ and $A\ne B$.

Instead of only considering functions $f:\mathcal{C}\rightarrow\RR$, it is convenient to allow the range of $f$ to be the extended real line $\RR^{+\infty}$, defined as~\mbox{$\RR^{+\infty}=(-\infty,+\infty]$}. We can define the \emph{domain}~of~$f$ as $\dom f=f^{-1}(\RR)$, i.e. the set of $\sigma\in\mathcal{C}$ such that $f(\sigma)$ is finite. A function $f:\mathcal{C}\rightarrow\RR^{+\infty}$ is said to be \emph{proper} if~$\dom f\neq\emptyset$, and $f$ is said to be convex if it is convex satisfying eq.~\eqref{eq:maintheorem_convex} on its domain. Furthermore, if $f$ is convex then $\dom f\subset\mathcal{C}$ is a convex subset. Finally, a function $f:\mathcal{C}\rightarrow[-\infty,+\infty)$ is \emph{concave} if $-f$ is convex on its domain (see for example~\cite[ch. 2]{Borwein2010}).

\subsection{Necessary and sufficent conditions for optimization}

Given a convex and compact subset $\mathcal{C}\subset\mathcal{H}$ and a convex function $f:\mathcal{C}\rightarrow \RR^{+\infty}$,  solving for an element~$\sigma^\star$ in~$\mathcal{C}$ that minimizes~$f$ is usually a daunting task.  Yet, in the following theorem, we state a necessary and sufficient condition for $\sigma^\star$ to minimize $f$, one which involves convex combinations of the form $(1-t)\sigma^\star+t\sigma$ where~$\sigma\in\mathcal{C}$.  

Here, we make use of the \emph{directional derivative} of $f$ at a point $A\in\mathcal{C}$. For $A$ in the domain of $f$ and $B\in\mathcal{H}$, this is defined by
\begin{equation}\label{eq:maintheorem_directionalderivative}
 f'(A;B):=\lim_{\,\,\,t\rightarrow 0^+}\frac{f(A+tB)-f(A)}{t}
\end{equation} 
if the limit exists. Since $f$ is convex, this limit always either exists or is infinite. For example, if $A$ is on the boundary of the domain of $f$, it is possible that $f'(A;B)=\pm\infty$.
(For example if $\mathcal{C}=H_{n,+}$ and $f(A)=\Tr(-\log A) $, then for $A,B\in H_{n,+}$ with $A\not>0$ and $B>0$ we have that $f'(A;B)=\infty$.)  If $A$ is in the interior of the domain, then $f'(A;B)$ is finite for all $B\in\mathcal{H}$.
\begin{theorem}\label{thm:maintheorem}
 Let $\mathcal{H}$ be a Hilbert space, $\mathcal{C}\subset \mathcal{H}$ a convex compact
 subset and $f:\mathcal{H}\rightarrow \RR^{+\infty}$ a convex
 function. Then an element $\sigma^\star\in\mathcal{C}$ minimizes~$f$ over~$\mathcal{C}$, i.e. \mbox{$\displaystyle\min_{\sigma\in\mathcal{C}}f(\sigma)=f(\sigma^\star)$}, if and only if for all $\sigma\in\mathcal{C}$
 \begin{equation}\label{eq:maintheorem-criterion}
  f'(\sigma^\star;\sigma-\sigma^\star)\geq 0.
 \end{equation}
\end{theorem}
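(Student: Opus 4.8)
The plan is to prove both directions of the equivalence, with the key conceptual tool being the relationship between convexity and the directional derivative along a line segment. The central observation is that for a fixed $\sigma \in \mathcal{C}$, the one-variable function
\begin{equation*}
g(t) := f\big((1-t)\sigma^\star + t\sigma\big), \qquad t \in [0,1],
\end{equation*}
is convex as a consequence of the convexity of $f$, and that $f'(\sigma^\star; \sigma - \sigma^\star)$ is precisely the right-hand derivative $g'(0^+)$. This reduces the problem to a statement about convex functions of a single real variable, where the geometry is transparent.

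For the forward (necessity) direction, I would assume $\sigma^\star$ minimizes $f$ over $\mathcal{C}$ and fix an arbitrary $\sigma \in \mathcal{C}$. Since $(1-t)\sigma^\star + t\sigma \in \mathcal{C}$ by convexity of $\mathcal{C}$, minimality gives $g(t) \geq g(0)$ for all $t \in [0,1]$, hence $\frac{g(t) - g(0)}{t} \geq 0$ for $t > 0$. Taking the limit as $t \to 0^+$ yields $f'(\sigma^\star; \sigma - \sigma^\star) = g'(0^+) \geq 0$, which is exactly eq.~\eqref{eq:maintheorem-criterion}. This direction is essentially immediate and requires no appeal to compactness.

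For the reverse (sufficiency) direction, I would assume eq.~\eqref{eq:maintheorem-criterion} holds for all $\sigma \in \mathcal{C}$ and again fix an arbitrary $\sigma$. The key inequality I want is the standard fact that a convex function lies above its tangent line, i.e. the difference quotient $\frac{g(t) - g(0)}{t}$ is nondecreasing in $t$ on $(0,1]$. Consequently
\begin{equation*}
f(\sigma) - f(\sigma^\star) = g(1) - g(0) \geq \lim_{t \to 0^+} \frac{g(t) - g(0)}{t} = f'(\sigma^\star; \sigma - \sigma^\star) \geq 0,
\end{equation*}
so $f(\sigma) \geq f(\sigma^\star)$. Since $\sigma$ was arbitrary, $\sigma^\star$ minimizes $f$ over $\mathcal{C}$.

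The main obstacle I anticipate is handling the extended-real-valued nature of $f$ and the behavior at the boundary of $\dom f$. If $f(\sigma^\star) = +\infty$ the statement is degenerate, so I would note that the interesting case has $\sigma^\star \in \dom f$, which makes the directional derivative well-defined (possibly $+\infty$). I must verify that the monotonicity of difference quotients and the passage to the limit remain valid when $g$ takes the value $+\infty$ at intermediate points or when $g'(0^+) = +\infty$; in the latter case the chain of inequalities still forces $f(\sigma) \geq f(\sigma^\star)$, and convexity of $g$ ensures the difference quotients behave monotonically even in the extended-real setting. I would also confirm that convexity of $f$ on $\dom f$ suffices to establish convexity of $g$ along the segment; compactness of $\mathcal{C}$ is used only to guarantee that a minimizer exists and that the optimization problem is well-posed, rather than in the equivalence itself.
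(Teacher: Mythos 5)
Your proposal is correct and follows essentially the same route as the paper: both directions are reduced to the one-variable convex function $h(t)=f\big((1-t)\sigma^\star+t\sigma\big)$ on $[0,1]$, with necessity coming from nonnegative difference quotients at a minimizer and sufficiency from the fact that a convex function with nonnegative right derivative at $0$ satisfies $h(1)\geq h(0)$. Your additional remarks on the extended-real-valued case and on compactness being irrelevant to the equivalence are accurate refinements of the same argument, not a different approach.
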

\noindent (See for example~\cite[ch 2.1]{Borwein2006} and~\cite[p. 147]{Borwein2010}).
\begin{proof}
 Let $\sigma^\star\in\mathcal{C}$. If $f(\sigma^\star)\leq f(\sigma)$ for all $\sigma\in\mathcal{C}$, then clearly $f$ cannot decrease under a small perturbation away from~$\sigma^\star$. So eq.~\eqref{eq:maintheorem-criterion} must hold for all $\sigma\in\mathcal{C}$.  Now suppose that eq.~\eqref{eq:maintheorem-criterion} holds for all $\sigma\in\mathcal{C}$. For a fixed~$\sigma$, consider the function
 \begin{equation*}
  h(t)= f\big((1-t)\sigma^\star+t\sigma\big)
 \end{equation*}
 on the interval $t\in[0,1]$. Note that $h$ is convex by convexity of $f$. Since $h'(0)\geq 0$, this implies that $h$ must be non-decreasing on $[0,1]$. In particular, this means that $h(0)\leq h(1)$ and so $f(\sigma^\star)\leq f(\sigma)$. Since this holds for all $\sigma\in\mathcal{C}$, the minimum of $f$ over~$\mathcal{C}$ is achieved at~$\sigma^\star$ as desired. 
\end{proof}

The main idea of this paper is to turn the criterion in eq.~\eqref{eq:maintheorem-criterion} into one that is more useful so that we may more easily determine if a given~$\sigma^\star$ is optimal.  In case the value of the directional derivative is linear in the choice of~$\sigma$, the criterion $f'(\sigma^\star;\sigma-\sigma^\star)\geq 0$ can be recast in terms of a \emph{supporting functional} for the convex set~$\mathcal{C}$. That is, a linear functional $\Phi:\mathcal{H}\rightarrow\RR$ such that $\Phi(\sigma)\leq c$ for all~$\sigma\in\mathcal{C}$, and $c=\Phi(\sigma^\star)$ for some $\sigma^\star\in\mathcal{C}$. 
Then the set $\left\{\sigma\in\mathcal{H}\,|\,\Phi(\sigma)=c\right\}$ is a \emph{supporting hyperplane} tangent to~$\mathcal{C}$ at the point $\sigma^\star\in\mathcal{C}$.

In a finite-dimensional Hilbert space, any linear functional can be written as~$\Phi(\sigma)=\langle\phi,\sigma\rangle$ for some~\mbox{$\phi\in\mathcal{H}$}. Since the Hilbert space considered here is $H_n$, the linear functionals are of the form $\Phi(\sigma)=\Tr[\phi\sigma]$ for some matrix $\phi\in\H_n$. Conversely, every matrix $\phi\in\H_n$ defines a linear functional $\Phi:\H_n\rightarrow\RR$ by $\Phi(\sigma)=\Tr[\phi\sigma]$.

\subsection{Linear differential operator \texorpdfstring{$D_{g,}$}{Dg}}

Going back to applications in quantum information, we first examine functions $f:\H_n(a,b)\rightarrow \RR$ of the form
\begin{equation}
 f(\sigma)=-\Tr[\rho g(\sigma)],
\end{equation}
where $\rho\in \H_{n,+}$ is a matrix and $g:(a,b)\rightarrow \mathbb{R}$ is an analytic function that we can extend to matrices in $\H_n(a,b)$.  Since~$g$~is analytic, we may easily take the necessary derivatives to investigate the criterion in Theorem~\ref{thm:maintheorem} (see for example~\cite[ch. 6]{Horn1991}). We then extend this analysis to extended-real value functions $f:H_n\rightarrow\RR^{+\infty}$ by considering carefully the cases when $\sigma\not\in\H_n(a,b)$, in which case $f(\sigma)=+\infty$ for most (but not necessarily all) $\sigma\not\in\H_{n}(a,b)$.

Let $g:(a,b)\rightarrow \mathbb{R}$ be an analytic function and $\Omega\subset\CC$ an open set containing $(a,b)$ such that $g$  can be extended to~$g:\Omega\rightarrow\CC$, a function that is analytic on~$\Omega$. If $A\in M_n$ is an $n\times n$ matrix whose eigenvalues are contained in $\Omega$, we may write (see~\cite[ch. 6.1]{Horn1991})
\[
 g(A)=\frac{1}{2\pi i}\oint_\gamma g(s)[s\mathds{1}-A]^{-1}ds,
\]
where $\gamma$ is any simple closed recitifiable curve in $\Omega$ that encloses the eigenvalues of $A$.
For a continuously differentiable family $A(t)\in \H_n$ of hermitian matrices, and $t_0$ such that the eigenvalues of $A(t_0)$ is contained in $\Omega$, we have 
\begin{multline*}
 2\pi i \left.\frac{d}{dt}g(A(t))\right|_{t=t_0}=\\
 \begin{array}{l}
 \hspace{3mm}=\displaystyle\oint_\gamma g(s)\frac{d}{dt}\left.\left([s\mathds{1}-A(t)]^{-1}\right)\right|_{t=t_0}ds\\
 \hspace{3mm}=\displaystyle\oint_\gamma g(s)[s\mathds{1}-A(t_0)]^{-1}A'(t_0)[s\mathds{1}-A(t_0)]^{-1}ds,
 \end{array}
\end{multline*}
where $A'(t_0)=\left.\frac{d}{dt}A(t)\right|_{t=t_0}$, 
and $\gamma$ is any simple, closed rectifiable curve in $\Omega$ that encloses all the eigenvalues of~$A(t_0)$. 
Since $A(t)$ is hermitian, we may write it in terms of its spectral decomposition (in particular at $t=t_0$) as
\[
A(t_0)=\sum_{i=1}^n a_i\ket{\psi_i}\bra{\psi_i}
\]
where $\ket{\psi_i}$ are the orthonormal eigenvectors of $A(t_0)$ and~$a_i$ the corresponding eigenvalues. Then the matrix elements of $\left.\frac{d}{dt}g(A(t))\right|_{t=t_0}$ in the eigenbasis of $A(t_0)$ are
\begin{multline*}
 \bra{\psi_i}\left(\left.\frac{d}{dt}g(A(t))\right|_{t=t_0}\right)\ket{\psi_j}=
 \\ \begin{array}{l}
     =\bra{\psi_i}A'(t_0)\ket{\psi_j}\displaystyle{\frac{1}{2\pi i}\oint_{\gamma}\frac{g(s)}{(s-a_i)(s-a_j)}ds}
 \\ =\bra{\psi_i}A'(t_0)\ket{\psi_j}\cdot
 \left\{
	  \begin{array}{ll}
          \frac{g(a_i)-g(a_j)}{a_i-a_j} & a_i\neq a_j\\
          g'(a_i)& a_i=a_j.
         \end{array}\right.
\end{array}
\end{multline*}
For a function $g:(a,b)\rightarrow\RR$ and an hermitian matrix~$A$ whose eigenvalues $\{a_1,\dots,a_n\}$ are contained in $(a,b)$, we define the matrix of the so-called \emph{divided differences}~(see~\cite[p. 123]{Bhatia1997})~as
\[
 \left[ T_{g,A}\right]_{ij}=\left\{\begin{array}{ll}
          \frac{g(a_i)-g(a_j)}{a_i-a_j} & a_i\neq a_j\\
          g'(a_i)& a_i=a_j.
         \end{array}
\right. 
\]
In the eigenbasis of $A(t_0)$, i.e. assuming that $A(t_0)$ diagonal, we may write 
\[
 \left.\frac{d}{dt}g(A(t))\right|_{t=t_0} =  T_{g,A}\circ A'(t_0) ,
\]
where $\circ$ represents the Hadamard (entrywise) product of matrices. The entrywise product of $ T_{g,A}$ with a matrix is a linear operator on the space of matrices, and for an arbitrary matrix $B\in M_n$ we write
\begin{equation}\label{eq:FrechetDef}
D_{g,A}(B)=  T_{g,A} \circ B,
\end{equation}
where $D_{g,A}$ is a linear operator on the space of matrices such that $\left.\frac{d}{dt}g(A(t))\right|_{t=t_0}=D_{g,A}(A'(0))$. This linear operator $D_{g,A}:M_n\rightarrow M_n$ is called the \emph{Fr\'echet derivative} of $g$ at $A$ (see for example~\cite[ch. X.4]{Bhatia1997}).  A function $g:H_n(a,b)\rightarrow H_n$ is said to be \emph{Fr\'echet differentiable} at a point $A$ when it's directional derivative
\[
 g'(A;B):=\lim_{\,\,\,t\rightarrow 0^+}\frac{g(A+tB)-g(A)}{t}
\]
is linear in $B$ and coincides with the Fr\'echet derivative, i.e. $D_{g,A}(B)=g'(A;B)$.

Furthermore, as long as $g'(a_i)\neq 0$ for all eigenvalues~$a_i$ of $A$, the linear operator $D_{g,A}$ is invertible, since we may define a matrix $ S_{g,A}$ as the  element-wise inverse of $ T_{g,A}$, namely 
\[
 \left[ S_{g,A}\right]_{ij}=\left\{\begin{array}{ll}
          \frac{a_i-a_j}{g(a_i)-g(a_j)} & a_i\neq a_j\\
          \frac{1}{g'(a_i)}& a_i=a_j,
         \end{array}
\right. 
\]
such that $ T_{g,A}\circ S_{g,A}\circ B = S_{g,A}\circ T_{g,A}\circ B = B$ for all matrices $B$. Then the inverse of the linear operator $D_{g,A}$ is then given by $D_{g,A}^{-1}(B)= S_{g,A}\circ B$, such that $D_{g,A}^{-1}(D_{g,A}(B)) = D_{g,A}(D_{g,A}^{-1}(B))=B$ for all matrices~$B$.

In the following section, we consider the extended-real valued function $f:\H_n\rightarrow \RR^{+\infty}$, in which case it is important to also extend the definition of $D_{g,A}$ to matrices $A$ whose eigenvalues are not in $(a,b)$.  The definition of $D_{g,A}$ is extended in the following manner. For $A\not\in\H(a,b)$, define the matrix $T_{g,A}$ as above on the eigenspaces of $A$ with corresponding eigenvalues in $(a,b)$, but to be zero otherwise $A$. Thus, in the eigenbasis of $A$, the matrix elements of $T_{g,A}$ are
\[
 \left[ T_{g,A}\right]_{ij}=\left\{\begin{array}{ll}
          \frac{g(a_i)-g(a_j)}{a_i-a_j} & a_i\neq a_j,\,a_i,a_j\in(a,b)\\
          g'(a_i)&a_i=a_j\in(a,b)\\
          0 & a_i\textrm{ or }a_j\not\in(a,b),
         \end{array}
\right. 
\]
such that $D_{g,A}(B)=T_{g,A}\circ B$. If $g'(a_i)\neq 0$ for all eigenvalues $a_i$ of $A$ in the interval $(a,b)$, then we can define the \emph{Moore-Penrose} inverse (or \emph{pseudo}-inverse) of $D_{g,A}$. In the eigenbasis of $A$, this is given by \mbox{$D_{g,A}^\ddagger(B)=S_{g,A}\circ B$} where $S_{g,A}$ is the matrix with matrix elements given in the eigenbasis of $A$ as
\[
[S_{g,A}]_{ij}=\left\{\begin{array}{ll}
          \frac{a_i-a_j}{g(a_i)-g(a_j)} & a_i\neq a_j,\,a_i,a_j\in(a,b)\\
          \frac{1}{g'(a_i)}& a_i=a_j\in(a,b)\\
          0 & a_i \textrm{ or }a_j\not\in(a,b),
         \end{array}
\right. 
\]
such that $D_{g,A}(D_{g,A}^\ddagger(B))=D_{g,A}^\ddagger(D_{g,A}(B)) =P_A B P_A$ for all matrices $B$, where $P_A$ is the projection matrix that projects onto the eigenspaces of $A$ whose corresponding eigenvalues are in $(a,b)$.  If $A\in\H_n(a,b)$, then $D_{g,A}^\ddagger$ coincides with $D_{g,A}^{-1}$. 

Finally, we note that the linear differential operator~$D_{g,A}$ is self-adjoint with respect to the trace inner product. Indeed, the matrix $ T_{g,A}$ is hermitian since $A$ is hermitian and  
\begin{align*}
 \Tr[C^\dagger D_{g,A}(B)] &= \Tr[C^\dagger ( T_{g,A}\circ B)] \\&= \Tr[( T_{g,A}\circ C)^\dagger B] =  \Tr[(D_{g,A}(C))^\dagger B],
\end{align*}
for all matrices $B$ and $C$.


\subsection{Functions of form \texorpdfstring{$f_\rho(\sigma)=-\Tr[\rho g(\sigma)]$}{f(s)=-Tr[pg(s)]}}

Given a matrix $\rho\in \H_{n,+}$ we can now consider functions of the form $f_\rho(\sigma)=-\Tr[\rho g(\sigma)]$, which is convex as long as the function $g:(a,b)\rightarrow\mathbb{R}$ is concave.  We can extend $f_\rho$ to an extended-real valued convex function $f_\rho:H_n\rightarrow\RR^{+\infty}$ in the following manner. If $\rho\in\H_{n,++}$ is strictly positive, then $f_\rho(\sigma)=+\infty$ whenever~$\sigma\not\in\H_{n}(a,b)$.  Otherwise, if $\rho\in\H_{n,+}$ has at least one zero eigenvalue, in the eigenbasis of $\rho$ we can write the matrices $\rho$ and $\sigma$ in block form as
\begin{equation}
 \label{eq:maintheorem_blockform}
 \rho=\begin{pmatrix}
       \tilde{\rho}& 0\\
       0& 0
      \end{pmatrix}
\hspace{5mm}
\textrm{ and }
\hspace{5mm}
 \sigma=\begin{pmatrix}
       \tilde{\sigma}& \tilde{\sigma}_{12}\\
       \tilde{\sigma}_{21}& \tilde{\sigma}_{22}
      \end{pmatrix},
\end{equation}
such that $\tilde{\rho}\in\H_{\tilde{n},++}$, where $\tilde{n}$ is the dimension of the support of $\rho$. Then, if all of the eigenvalues of $\tilde{\sigma}$ are in~$(a,b)$, the matrix-valued function $g(\tilde{\sigma})$ is well-defined, and we can define \[f_\rho(\sigma)=f_{\tilde{\rho}}(\tilde{\sigma})=-\Tr[\tilde{\rho}g(\tilde{\sigma})].\] Otherwise,  define $f_\rho(\sigma)=+\infty$ if $\tilde{\sigma}\not\in\H_{\tilde{n}}(a,b)$.

The standard formula for the relative entropy of~$\rho$ with respect to~$\sigma$ is recovered by choosing the function \mbox{$g:(0,\infty)\rightarrow\RR$} to be \mbox{$g(x)=-S(\rho)+\log(x)$}, where $S(\rho)=\Tr[\rho\log\rho]$ is a constant for a fixed $\rho$. Namely,
\[
 f_\rho(\sigma)=\Tr[\rho(\log\rho-\log\sigma)]=S(\rho\| \sigma).
\]
In particular, if $\sigma>0$ then $f_\rho(\sigma)=S(\rho\| \sigma)$ is finite for all $\rho$, and $S(\rho\| \sigma)=+\infty$ if $\sigma$ is zero on the support of $\rho$, i.e. if $\bra{\psi}\sigma\ket{\psi}=0$ for some $\ket{\psi}$ such that $\bra{\psi}\rho\ket{\psi}>0$.

We now show how to find the derivatives of the function $f_\rho$. If $\sigma\in\H_n(a,b)$ then the directional derivative $f_\rho'(\sigma;\cdot)=D_{f_\rho,\sigma}(\cdot)$ is a well-defined linear functional and
\[
 D_{f_\rho,\sigma}(\tau)=-\Tr[\rho D_{g,\sigma}(\tau)]
\]
where $D_{g,\sigma}$ is the linear operator \mbox{$D_{g,\sigma}:\H_n\rightarrow\H_n$} defined previously. If $\sigma$ is in $\dom f_\rho$ but $\sigma\not\in\H_n(a,b)$, then we can compute the directional derivatives in the following manner. 

Given a concave analytic function $g:(a,b)\rightarrow\RR$ and a matrix $\rho\in\H_{n,+}$, consider the convex extended function $f_\rho:H_n\rightarrow\RR^{+\infty}$  given by \mbox{$f_\rho(\sigma)=-\Tr[\rho g(\sigma)]$}.  Let \mbox{$\sigma\in\dom f_\rho$} such that $\tilde{\sigma}$, as defined in equation~\eqref{eq:maintheorem_blockform}, is in $\H_{\tilde{n}}(a,b)$, and thus $f_\rho(\sigma)=f_{\tilde \rho}(\tilde{\sigma})$. Let $\tau\in\H_n$ and define $\tilde{\tau}\in\H_{\tilde{n}}$ as the block of $\tau$ on the support of $\rho$, analogous to equation~\eqref{eq:maintheorem_blockform}, i.e.
\[
  \tau=\begin{pmatrix}
       \tilde{\tau}& \tilde{\tau}_{12}\\
       \tilde{\tau}_{21}& \tilde{\tau}_{22}
      \end{pmatrix}.
\]
Since $\H_n(a,b)$ is open, there exists an $\epsilon>0$ small enough such that $\tilde{\sigma}+t\tilde{\tau}\in\H_{\tilde{n}}(a,b)$ for all $t\in[0,\epsilon)$, and thus $f_\rho(\sigma+t\tau)$ is finite for $t\in[0,\epsilon)$. Therefore the directional derivative $f_\rho'(\sigma;\tau)$ exists for all $\tau\in\H_n$ and is given by
\begin{equation}
\label{eq:maintheorem_DirectionalDerivModified}
 f_\rho'(\sigma;\tau) = -\Tr[\rho D_{g,\sigma}(\tau)].
\end{equation}
Indeed, consider $\rho$, $\sigma$ and $\tau$ in block form as in equation~\eqref{eq:maintheorem_blockform} such that $\tilde{\rho}>0$. Then $f_\rho(\sigma+t\tau)=f_{\tilde{\rho}}(\tilde{\sigma}+t\tilde{\tau})$ for all $t\in[0,\epsilon)$ and
\begin{align*}
f_\rho'(\sigma;\tau)
&=\lim_{t\rightarrow 0^+}\frac{f_{\tilde{\rho}}(\tilde{\sigma}+t\tilde{\tau})-f_{\tilde{\rho}}(\tilde{\sigma})}{t}\\
&=-\Tr\left[\tilde{\rho}\lim_{t\rightarrow 0^+}\frac{g(\tilde{\sigma}+t\tilde{\tau})-g(\tilde{\sigma})}{t}\right]\\
&=-\Tr[\tilde{\rho}g'(\tilde{\sigma};\tilde{\tau})]\\
&= -\Tr[\tilde{\rho}D_{g,\tilde{\sigma}}(\tilde{\tau})].
\end{align*}
Note that $\Tr[\tilde{\rho}D_{g,\tilde{\sigma}(\tilde{\tau})}]=\Tr[\rho D_{g,\sigma}(\tau)]$, so this simplifies to the form in equation~\eqref{eq:maintheorem_DirectionalDerivModified}.

We can now restate the criterion in Theorem \ref{thm:maintheorem} for functions of the form $f_\rho(\sigma)=-\Tr[\rho g(\sigma)]$ in terms of supporting functionals.
\begin{theorem}
\label{thm:maintheorem_secondary}
 Let $\mathcal{C}\subset  \H_{n}$ be a convex compact subset, \mbox{$\rho\in \H_{n,+}$},  and $g:(a,b)\rightarrow\RR$ be a concave analytic function. 
As above, consider the convex extended-real valued function \mbox{$f_\rho:\H_n\rightarrow\RR^{+\infty}$} given by \mbox{$f_\rho(\sigma)=-\Tr[\rho g(\sigma)]$}. Then a matrix $\sigma^\star\in\mathcal{C}$ minimizes~$f_\rho$ over~$\mathcal{C}$ if and only if 
 \begin{equation}\label{eq:FirstSHP}
  \Tr[D_{g,\sigma^\star}(\rho)\sigma]\leq\Tr[D_{g,\sigma^\star}(\rho)\sigma^\star]
 \end{equation}
for all $\sigma\in\mathcal{C}$. Thus the matrix $D_{g,\sigma^\star}(\rho)\in\H_n$ defines a supporting functional of $\mathcal{C}$ at $\sigma^\star$.
\end{theorem}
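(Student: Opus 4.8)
The plan is to derive the criterion \eqref{eq:FirstSHP} directly from the general optimality condition of Theorem~\ref{thm:maintheorem} by rewriting the directional derivative of $f_\rho$ via self-adjointness of the Fr\'echet derivative. Since $g$ is concave and $\rho\in\H_{n,+}$, the function $f_\rho$ is convex, so Theorem~\ref{thm:maintheorem} applies and tells us that $\sigma^\star\in\mathcal{C}$ minimizes $f_\rho$ over the convex compact set $\mathcal{C}$ if and only if $f_\rho'(\sigma^\star;\sigma-\sigma^\star)\geq 0$ for every $\sigma\in\mathcal{C}$. The entire proof then amounts to transforming this one scalar inequality into \eqref{eq:FirstSHP}, keeping track of the fact that every step is an equivalence so that both implications come for free.

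First I would invoke the directional-derivative formula \eqref{eq:maintheorem_DirectionalDerivModified}, which was established to hold for arbitrary directions $\tau\in\H_n$ even when $\sigma^\star$ lies on the boundary of $\H_n(a,b)$. Taking $\tau=\sigma-\sigma^\star$ gives $f_\rho'(\sigma^\star;\sigma-\sigma^\star)=-\Tr[\rho\, D_{g,\sigma^\star}(\sigma-\sigma^\star)]$. Next I would use the self-adjointness of $D_{g,\sigma^\star}$ with respect to the trace inner product, together with the fact that $\rho$ and $D_{g,\sigma^\star}(\rho)$ are hermitian, to move the operator off of $\sigma-\sigma^\star$ and onto $\rho$, obtaining $\Tr[\rho\, D_{g,\sigma^\star}(\sigma-\sigma^\star)]=\Tr[D_{g,\sigma^\star}(\rho)(\sigma-\sigma^\star)]$. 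The optimality condition $f_\rho'(\sigma^\star;\sigma-\sigma^\star)\geq 0$ then reads $\Tr[D_{g,\sigma^\star}(\rho)(\sigma-\sigma^\star)]\leq 0$, which rearranges to exactly \eqref{eq:FirstSHP}; since $\sigma\mapsto\Tr[D_{g,\sigma^\star}(\rho)\sigma]$ is linear and bounded above on $\mathcal{C}$ by its value at $\sigma^\star$, the matrix $D_{g,\sigma^\star}(\rho)$ defines the claimed supporting functional.

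The steps requiring the most care are the ones I would treat as genuine small lemmas rather than formalities. The self-adjointness manipulation relies on $D_{g,\sigma^\star}(\rho)$ being hermitian, so that its adjoint appears without a dagger; this holds because, in the eigenbasis of $\sigma^\star$, the divided-difference matrix $T_{g,\sigma^\star}$ is real and symmetric, and the Hadamard product of a real symmetric matrix with the hermitian matrix $\rho$ is again hermitian. The more delicate point, and the one I expect to be the main obstacle, is justifying that \eqref{eq:maintheorem_DirectionalDerivModified} is legitimate precisely at the relevant $\sigma^\star$: one must know that $\sigma^\star\in\dom f_\rho$ (otherwise $f_\rho(\sigma^\star)=+\infty$ and it cannot be a minimizer once $\dom f_\rho\neq\emptyset$) and that the one-sided limit defining $f_\rho'(\sigma^\star;\sigma-\sigma^\star)$ exists and equals $-\Tr[\rho\, D_{g,\sigma^\star}(\sigma-\sigma^\star)]$. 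This is exactly the content of the block-matrix argument preceding the theorem, where openness of $\H_{\tilde{n}}(a,b)$ guarantees $\tilde{\sigma}^\star+t\tilde{\tau}\in\H_{\tilde{n}}(a,b)$ for small $t\geq 0$ and the derivative passes through the trace; I would simply cite that computation rather than repeat it. With these two points in hand the equivalence is immediate in both directions.
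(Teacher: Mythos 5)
Your proposal is correct and follows essentially the same route as the paper's own proof: invoke Theorem~\ref{thm:maintheorem}, substitute the directional-derivative formula $f_\rho'(\sigma^\star;\sigma-\sigma^\star)=-\Tr[\rho\, D_{g,\sigma^\star}(\sigma-\sigma^\star)]$, and use self-adjointness of $D_{g,\sigma^\star}$ to obtain \eqref{eq:FirstSHP}. The extra care you take about $\sigma^\star$ lying in $\dom f_\rho$ and the hermiticity of $D_{g,\sigma^\star}(\rho)$ matches what the paper handles in the discussion preceding the theorem.
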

\begin{proof}
From Theorem \ref{thm:maintheorem}, we have that~$\sigma^\star$ minimizes~$f_\rho$ over~$\mathcal{C}$ if and only if $f_\rho'(\sigma^\star;\sigma-\sigma^\star)\geq0$ for all $\sigma\in\mathcal{C}$. Note that $\sigma^\star$ must be in the domain of~$f_\rho$. From the analysis above, we see that the directional derivative is
\[
 f_\rho'(\sigma^\star;\sigma-\sigma^\star)=-\Tr[\rho D_{g,\sigma^\star}(\sigma-\sigma^\star)].
\]
Since $D_{g,\sigma^\star}$ as a linear operator is self-adjoint with respect to the trace inner product, this yields the inequality in~\eqref{eq:FirstSHP} for all $\sigma\in\mathcal{C}$ if and only if $\sigma^\star$ minimizes $f_\rho$ over $\mathcal{C}$, as desired.
\end{proof}

Note that the matrix $D_{g,\sigma^\star}(\rho)\in \H_n$ defines a linear functional~$\Phi:\H_n\rightarrow\RR$ given by $\Phi(\sigma)=\Tr[D_{g,\sigma^\star}(\rho)\sigma]$ such that the criterion
\[
\Phi(\sigma)\leq\Phi(\sigma^\star) \,\,\textrm{ for all }   \sigma\in\mathcal{C}                                                                                                                                                                                                              \]
defines a supporting functional of~$\mathcal{C}$ at the point~\mbox{$\sigma^\star\in\mathcal{C}$}.  
This acts as ``witness'' for~$\mathcal{C}$ in the sense that, for \mbox{$\phi=D_{g,\sigma^\star}(\rho)$} and a constant $c=\Tr[\phi\sigma^\star]$, if $\sigma\in\H_{n}$ is a matrix such that $\Tr[\phi\sigma]>c$ then $\sigma\not\in\mathcal{C}$. Furthermore, if~$\sigma^\star$ is on the boundary of~$\mathcal{C}$, the hyperplane defined by the set of all matrices $\sigma\in\H_{n}$ such that $\Tr[\phi\sigma]=c$ is tangent to~$\mathcal{C}$ at the point~$\sigma^\star$. This criterion is only useful in characterizing the function~$f_\rho$, however, if the linear functional~$\Phi$ defined by~$\phi$ is nonconstant on~$\mathcal{C}$. That is, if there exists at least one $\sigma\in\mathcal{C}$ such that $\Phi(\sigma^\star)$ is strictly less than $\Phi(\sigma)$. This occurs only when~$\sigma^\star$ lies on the boundary of~$\mathcal{C}$, which is proved in the following corollary.  

Here, we mean that an element~$\sigma\in\mathcal{C}$ is in the interior of~$\mathcal{C}$ if for all $\sigma'\in\mathcal{C}$ there exists a $t<0$ with $\abs{t}$ small enough such that $\sigma+t(\sigma'-\sigma)$ is in~$\mathcal{C}$, and~$\sigma$ is on the boundary of~$\mathcal{C}$ otherwise. If the convex subset $\mathcal{C}$ is full-dimensional in $\H_n$, then these notions coincide with the standard definitions of the interior and boundary of $\mathcal{C}$. Otherwise, this coincides with the notion of the \emph{relative interior} and \emph{relative boundary} (see~\cite[p. 66]{Borwein2010}).

\begin{corollary}
\label{cor:boundary}
Let $g$, $\rho$, $f_\rho$ and $\mathcal{C}$ be defined as above  such that $\sigma^\star\in\mathcal{C}$ optimizes $f_\rho$ over~$\mathcal{C}$.  If the the linear functional $\Phi:\mathcal{C}\rightarrow\mathbb{R}$ defined by $\Phi(\sigma)=\Tr[\phi\sigma]$
is nonconstant on~$\mathcal{C}$, where $\phi=D_{g,\sigma^\star}(\rho)$, then $\sigma^\star$ is a boundary point of $\mathcal{C}$.
\end{corollary}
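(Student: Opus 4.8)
The plan is to prove the contrapositive: I will assume that $\sigma^\star$ lies in the interior of $\mathcal{C}$ and deduce that $\Phi$ must then be constant on $\mathcal{C}$, which contradicts the hypothesis that $\Phi$ is nonconstant. Hence $\sigma^\star$ can only be a boundary point.

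First I would invoke Theorem~\ref{thm:maintheorem_secondary}. Since $\sigma^\star$ optimizes $f_\rho$ over $\mathcal{C}$, the inequality~\eqref{eq:FirstSHP} gives $\Phi(\sigma)\le\Phi(\sigma^\star)$ for every $\sigma\in\mathcal{C}$, where $\phi=D_{g,\sigma^\star}(\rho)$ and $\Phi(\sigma)=\Tr[\phi\sigma]$. In other words $\sigma^\star$ maximizes the linear functional $\Phi$ over $\mathcal{C}$, and it then suffices to show that this value is simultaneously a lower bound, i.e.\ that $\Phi(\sigma)\ge\Phi(\sigma^\star)$ for all $\sigma\in\mathcal{C}$ as well.

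Next I would exploit the interior-point hypothesis exactly in the form defined just before the corollary: for any fixed $\sigma\in\mathcal{C}$, since $\sigma^\star$ is interior there exists $t<0$ with $\abs{t}$ small enough that the point $\sigma''=\sigma^\star+t(\sigma-\sigma^\star)$ again belongs to $\mathcal{C}$. Applying the supporting-functional inequality~\eqref{eq:FirstSHP} to $\sigma''$ and using linearity of $\Phi$ gives
\[
\Phi(\sigma^\star)+t\big(\Phi(\sigma)-\Phi(\sigma^\star)\big)=\Phi(\sigma'')\le\Phi(\sigma^\star),
\]
so that $t\big(\Phi(\sigma)-\Phi(\sigma^\star)\big)\le 0$. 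Because $t<0$, dividing through reverses the inequality and yields $\Phi(\sigma)\ge\Phi(\sigma^\star)$. Combined with the upper bound of the previous step this forces $\Phi(\sigma)=\Phi(\sigma^\star)$, and since $\sigma\in\mathcal{C}$ was arbitrary, $\Phi$ is constant on $\mathcal{C}$.

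The argument is short, so I do not anticipate a serious obstacle; the one point demanding care is the bookkeeping of the sign of $t$. The whole mechanism rests on the fact that interiority, in the sense defined above, permits moving in the \emph{negative} direction $-(\sigma-\sigma^\star)$ while staying inside $\mathcal{C}$, which is precisely what lets me bound $\Phi(\sigma)$ by $\Phi(\sigma^\star)$ from below as well as above. This also explains, for completeness, why the witness criterion is informative only on the boundary: at an interior point the functional $\phi$ annihilates every admissible direction and the supporting hyperplane becomes trivial.
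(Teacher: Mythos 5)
Your proof is correct and follows essentially the same route as the paper's: assume $\sigma^\star$ is interior, apply the supporting-functional inequality from Theorem~\ref{thm:maintheorem_secondary} to the point $\sigma^\star+t(\sigma-\sigma^\star)$ with $t<0$, and conclude $\Tr[\phi(\sigma-\sigma^\star)]=0$ for all $\sigma\in\mathcal{C}$, contradicting nonconstancy. The sign bookkeeping you flag is handled correctly.
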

\begin{proof}
Suppose~$\sigma^\star$ is in the interior of~$\mathcal{C}$.  Since~$\sigma^\star$ is optimal, Theorem~\ref{thm:maintheorem_secondary} implies that $\Tr[\phi(\sigma-\sigma^\star)]\leq 0$ for all~$\sigma\in\mathcal{C}$. However, for each $\sigma$ there exists a $t<0$ with~$\abs{t}$ small enough such that $\sigma^\star+t(\sigma-\sigma^\star)$ is also in~$\mathcal{C}$. So we must also have \mbox{$\Tr[\phi(\sigma-\sigma^\star)]\geq 0$} for all $\sigma\in\mathcal{C}$. Hence $\Tr[\phi(\sigma-\sigma^\star)]=0$ for all $\sigma\in\mathcal{C}$, which proves the corollary.
\end{proof}

If~$\sigma^\star$ is on the boundary of~$\mathcal{C}$, then characterizing all of the supporting functionals of~$\mathcal{C}$ that are maximized by~$\sigma^\star$ allows us to find all matrices~$\rho$ for which~$\sigma^\star$ minimizes the function $f_\rho f(\sigma)=-\Tr[\rho g(\sigma)]$ over~$\mathcal{C}$.  
\begin{corollary}
Let $g$, $\rho$, $f_\rho$ and $\mathcal{C}$ be defined as above and $\sigma^\star$ be a boundary point of $\mathcal{C}$. Assume furthermore that $g'(\lambda)\neq 0$ for all eigenvalues $\lambda\in(a,b)$ of~$\sigma^\star$.  Then $f_\rho$ achieves a minimum at~$\sigma^\star$ if and only of~$\rho$ is of the form
\begin{equation}
 \rho=D_{g,\sigma^\star}^{\ddagger}(\phi)
\end{equation}
as long as $0\leq D_{g,\sigma^\star}^{\ddagger}(\phi)$, where $\phi\in \H_n$ is zero outside of the support of $D_{g,\sigma^\star}$ and defines a supporting functional of~$\mathcal{C}$ that is maximized by~$\sigma^\star$, i.e
\begin{equation}
\label{eq:SepHyp}
 \Tr[\phi\sigma]\leq \Tr[\phi\sigma^\star]
\end{equation}
for all $\sigma\in\mathcal{C}$.
\end{corollary}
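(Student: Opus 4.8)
The plan is to obtain the corollary by directly inverting the linear relation from Theorem~\ref{thm:maintheorem_secondary} between $\rho$ and its associated functional $\phi:=D_{g,\sigma^\star}(\rho)$. That theorem already says $\sigma^\star$ minimizes $f_\rho$ over $\mathcal{C}$ precisely when $\phi$ satisfies the supporting inequality \eqref{eq:SepHyp}; since $\sigma^\star$ is a boundary point, such a $\phi$ can be a genuine (nonconstant) supporting functional, consistent with Corollary~\ref{cor:boundary}. All that remains is to show that $\rho\mapsto\phi$ is inverted by the pseudo-inverse $D_{g,\sigma^\star}^{\ddagger}$, giving the stated parametrization. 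The hypothesis that $g'(\lambda)\neq0$ for every eigenvalue $\lambda\in(a,b)$ of $\sigma^\star$ is exactly what makes $S_{g,\sigma^\star}$, and hence $D_{g,\sigma^\star}^{\ddagger}$, well defined on the support of $D_{g,\sigma^\star}$, so that the composition identities $D_{g,\sigma^\star}(D_{g,\sigma^\star}^{\ddagger}(B))=D_{g,\sigma^\star}^{\ddagger}(D_{g,\sigma^\star}(B))=P_{\sigma^\star}BP_{\sigma^\star}$ are available.

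For the ``if'' direction I would take a matrix $\phi$ that is zero outside the support of $D_{g,\sigma^\star}$ (equivalently $\phi=P_{\sigma^\star}\phi P_{\sigma^\star}$) and satisfies \eqref{eq:SepHyp}, and set $\rho=D_{g,\sigma^\star}^{\ddagger}(\phi)$. The requirement $0\leq D_{g,\sigma^\star}^{\ddagger}(\phi)$ guarantees $\rho\in\H_{n,+}$, so $f_\rho$ is a legitimate function of the assumed form. Applying $D_{g,\sigma^\star}$ and using the composition identity together with $\phi=P_{\sigma^\star}\phi P_{\sigma^\star}$ gives $D_{g,\sigma^\star}(\rho)=P_{\sigma^\star}\phi P_{\sigma^\star}=\phi$; thus \eqref{eq:SepHyp} is exactly criterion \eqref{eq:FirstSHP}, and Theorem~\ref{thm:maintheorem_secondary} yields that $\sigma^\star$ minimizes $f_\rho$ over $\mathcal{C}$.

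For the ``only if'' direction I would assume $\sigma^\star$ minimizes $f_\rho$ and put $\phi:=D_{g,\sigma^\star}(\rho)$. Theorem~\ref{thm:maintheorem_secondary} immediately gives that this $\phi$ obeys \eqref{eq:SepHyp}, i.e.\ it is a supporting functional maximized by $\sigma^\star$, and because $\phi$ is a value of $D_{g,\sigma^\star}$ it is automatically zero outside the support of that operator, so it is an admissible choice. Applying the pseudo-inverse then gives $D_{g,\sigma^\star}^{\ddagger}(\phi)=D_{g,\sigma^\star}^{\ddagger}(D_{g,\sigma^\star}(\rho))=P_{\sigma^\star}\rho P_{\sigma^\star}$.

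The one real obstacle is the final identification $P_{\sigma^\star}\rho P_{\sigma^\star}=\rho$, which is what makes the formula recover $\rho$ itself rather than just its compression to the $(a,b)$-eigenspaces of $\sigma^\star$. I expect this to be the delicate point, and the place where $\sigma^\star\in\dom f_\rho$ (finiteness of the minimum) must enter: one has to show $\rho$ is supported inside the range of $P_{\sigma^\star}$, i.e.\ on the eigenspaces of $\sigma^\star$ with eigenvalues in $(a,b)$. In the relative-entropy model ($g=\log$, $(a,b)=(0,+\infty)$) this is the familiar fact that $S(\rho\|\sigma^\star)<+\infty$ forces $\mathrm{supp}(\rho)\subseteq\mathrm{supp}(\sigma^\star)$; for general concave analytic $g$ I would argue in the same spirit, inspecting the directional derivative \eqref{eq:maintheorem_DirectionalDerivModified} along directions coupling $P_{\sigma^\star}$ to its complement and showing that any part of $\rho$ outside $\mathrm{range}(P_{\sigma^\star})$ would either remove $\sigma^\star$ from $\dom f_\rho$ or create a direction of strict decrease, contradicting optimality. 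With that support fact established, $P_{\sigma^\star}\rho P_{\sigma^\star}=\rho$ and the two implications close into the claimed equivalence.
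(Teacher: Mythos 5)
Your proposal follows essentially the same route as the paper: both directions reduce to Theorem~\ref{thm:maintheorem_secondary} by identifying $\phi=D_{g,\sigma^\star}(\rho)$ and inverting with the pseudo-inverse $D_{g,\sigma^\star}^{\ddagger}$, using $g'(\lambda)\neq0$ to make $S_{g,\sigma^\star}$ well defined. The one step you flag as delicate, namely $P_{\sigma^\star}\rho P_{\sigma^\star}=\rho$, is exactly the point the paper dispatches in one line by observing that finiteness of $f_\rho(\sigma^\star)$ forces $\rho$ to vanish outside the support of $D_{g,\sigma^\star}$; your instinct that this is where $\sigma^\star\in\dom f_\rho$ must enter matches the paper's argument.
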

The requirement that $\phi$ be zero outside of the support of $D_{g,\sigma^\star}$ means that $\bra{\psi}\phi\ket{\psi}=0$ for every eigenvector~$\ket{\psi}$ of $\sigma^\star$ with corresponding eigenvalue $\lambda\not\in(a,b)$. 
\begin{proof}
Suppose~$\sigma^\star$ minimizes $f_\rho$ over~$\mathcal{C}$. Then the matrix~$\phi=D_{g,\sigma^\star}(\rho)$ is zero outside of the support of $D_{g,\sigma^\star}$ by definition of $D_{g,\sigma^\star}$, and $\phi$ defines a supporting functional of~$\mathcal{C}$ by Theorem \ref{thm:maintheorem_secondary}. Since $f_\rho(\sigma^\star)$ is finite, $\rho$ must be zero outside of the support of~$D_{g,\sigma^\star}$, so we have that $D_{g,\sigma^\star}^\ddagger( D_{g,\sigma^\star}(\rho))=\rho$. Thus $\rho=D_{g,\sigma^\star}^{\ddagger}(\phi)$ as desired. 

Now suppose that $\phi\in \H_n$ is a matrix that defines a supporting functional of~$\mathcal{C}$ satisfying eq. \eqref{eq:SepHyp} for all $\sigma\in\mathcal{C}$ and is zero outside of the support of~$D_{g,\sigma^\star}$. If \mbox{$\rho=D_{g,\sigma^\star}^{\ddagger}(\phi)\geq 0$}, then $f_\rho$ is convex and the matrix \mbox{$\phi=D_{g,\sigma^\star}(\rho)$} is of the desired form. Hence, by Theorem~\ref{thm:maintheorem_secondary}, we have that~$\sigma^\star$ minimizes $f_\rho$ as desired.
\end{proof}

\section{Relative entropy of entanglement}
\label{sec:REE}
For a quantum state~$\rho$, the relative entropy of entanglement is a quantity that may be defined in terms of a convex optimization problem.  The function that is to be optimized is the relative entropy $S(\rho\| \sigma)$, defined as
\begin{equation}
 S(\rho\| \sigma)=-S(\rho)-\Tr[\rho\log\sigma],
\end{equation}
and $S(\rho)=-\Tr[\rho\log\rho]$ is the von Neuman entropy. For $\rho\in \H_{n,+,1}$ and $\sigma\in \H_{n,+}$, the relative entropy has the important properties that $S(\rho\| \sigma)\geq0$ and $S(\rho\| \sigma)=0$ if and only if $\rho=\sigma$.  We can extend the range of $S(\rho\| \sigma)$ to $[0,+\infty]$ such that $S(\rho\| \sigma)=+\infty$ if the matrix~$\rho$ is nonzero outside the support of~$\sigma$~\cite{Ohya1993}. That is, if $\bra{\psi}\rho\ket{\psi}>0$ for some $\ket{\psi}$ such that $\bra{\psi}\sigma^\star\ket{\psi}=0$. 

The \emph{relative entropy of entanglement} (REE) of a state $\rho\in \H_{n,+,1}$ was originally defined as
\begin{equation}
 E_{\mathcal{D}}(\rho)=\min_{\sigma\in \mathcal{D}}S(\rho\| \sigma),
\end{equation}
where $\mathcal{D}\subset \H_{n,+,1}$ is the convex subset of separable states~\cite{Vedral1997,Vedral1998}. 
The REE has not only been shown to be a useful measure of entanglement, but its value comprises a computable lower bound to another important measure of entanglement, the distillable entanglement~\cite{Bennett1996,Rains1999a,Rains1999}, whose optimization over purification protocols is much more difficult than the convex optimization required to calculate the REE.  Additionally, the regularized version of the REE, defined as
\begin{equation*}
 E_{\mathcal{D}}^\infty(\rho)=\lim_{k\rightarrow\infty}\frac{1}{k}E_{\mathcal{D}}(\rho^{\otimes k} ),
\end{equation*}
plays a role analogous to entropy in thermodynamics~\cite{Brandao2010} and gives an improved upper bound to the distillable entanglement~\cite{Rains1999a}.
Recall that $ E_{\mathcal{D}}^\infty(\rho)\le  E_{\mathcal{D}}(\rho)$.
Unfortunately, the computational complexity of computing the REE is high, since it is difficult to characterize when a state is separable~\cite{Gurvits2003}.

 Alternatively, the relative entropy of entanglement can be defined as
\begin{equation}
 E_{\mathcal{P}}(\rho)=\min_{\sigma\in \mathcal{P}}S(\rho\| \sigma),
\end{equation}
where the optimization is instead taken over the convex set of states that are positive under partial transposition (PPT),  denoted by $\mathcal{P}$~\cite{Rains1999a,Rains1999}. This definition of the REE, as well as its regularized version $E_\mathcal{P}^\infty$, are also upper bounds to the distillable entanglement.  Since $\mathcal{P}$ includes the separable states of quantum systems of any dimension~\cite{Peres1996}, the quantities $E_{\mathcal{P}}$ and $E_\mathcal{P}^\infty$ are smaller than their $\mathcal{D}$-based counterparts, so they offer improved bounds to the distillable entanglement. Because of this fact, along with the fact that the set of PPT states is much easier to characterize than the set of separable ones, 
i.e. polynomially in the dimension of the state,
we will primarily take $E_\mathcal{P}$ to be the REE for the remainder of this paper.

Although we limit ourselves here to consideration of the sets of separable and PPT states, it is also possible to define a relative entropy with respect to \emph{any} convex set of positive operators. That is, given any convex subset $\mathcal{C}\subset\H_{n,+}$, we can define the quantity
\[
 E_{\mathcal{C}}(\rho)=\min_{\sigma\in\mathcal{C}} S(\rho\| \sigma).
\]
Such quantities are useful in generalized resource theories in quantum information~\cite{Gour2007,Brand2010,Narasimhachar2013}. In such cases, the set of ``free'' states that may be used in a resource theory comprises a convex set and the resourcefulness of a given state may be measured by its relative entropy to the set of free states. Many of the results derived here for $E_\mathcal{P}$ also hold true for relative entropies with respect to arbitrary convex sets $\mathcal{C}$.

There is no systematic method for calculating the REE with respect to either $\mathcal{D}$ or $\mathcal{P}$, even for pure states, so it is  worthwhile to seek cases for which an explicit expression for the REE may be obtained.  From a given state~$\sigma^\star$, a method to determine all entangled states $\rho\not\in\mathcal{P}$ whose REE may be given by $E_\mathcal{P}(\rho)=S(\rho\| \sigma^\star)$ has been previously developed~\cite{Ishizaka2003,Friedland2011}. In the following, using our notation from section \ref{sec:maintheorem}, we restate the results from Friedland and Gour~\cite{Friedland2011} and omit the proofs.

\subsection{Derivative of \texorpdfstring{$\Tr[\rho\log(A+tB)]$}{Tr[plog(A+tB)}}
For a fixed quantum state $\rho\in \H_{n,+,1}$, the relative entropy $S(\rho\| \sigma)$ is minimized over $\mathcal{P}$ whenever the function $f_\rho(\sigma)=-\Tr[\rho\log(\sigma)]$ is minimized. Thus, given a state~$\sigma^\star$ on the boundary of $\mathcal{P}$ and taking $g(x)=\log(x)$ (which is operator concave~\cite{Carlen2010}), we may use the analysis from section \ref{sec:maintheorem} in order to find states~$\rho$ such that the relative entropy $S(\rho\| \sigma)=f_\rho(\sigma)$ is minimized by~$\sigma^\star$.  

Note that $g(x)=\log x$ is analytic on $(0,+\infty)$. Thus, for any matrix $A\in\H_{n,+}$ that is zero outside the support of $\rho$, the directional derivative can be given as \mbox{$f_\rho'(A;B)=D_{\log,A}(B)$}, and this derivative exists for all \mbox{$B\in\H_n$}. For simplicity, we define $L_A=D_{\log,A}$ as well as $S_A=S_{\log,A}$.
 Noting that $\frac{d}{dx}\log(x)=\frac{1}{x}$, the corresponding matrix $S_A$ is given by
\[
[S_A]_{ij}=\left\{\begin{array}{ll}
          \frac{a_i-a_j}{\log(a_i)-\log(a_j)} & a_i\neq a_j,\,a_i,a_j>0\\
          a_i& a_i=a_j>0\\
          0 & a_i=0 \textrm{ or }a_j=0.
         \end{array}
\right. 
\]
Finally, we note that $L_A(A)=P_A$ and also $L_A^\ddagger(\mathds{1})=L_A^\ddagger(P_A)=A$ for all $A\in \H_{n,+}$. This is a special property of the choice of function $g(x)=\log x$ that makes the relative entropy easy to study compared to arbitrary concave functions~$g$.

\subsection{A criterion for closest \texorpdfstring{$\mathcal{P}$}{P}-states}
\label{sec:CSScriterion}
A state $\sigma^\star\in\mathcal{P}$ that minimizes the relative entropy with~$\rho$, i.e. $E_\mathcal{P}(\rho)=S(\rho\| \sigma^\star)$, is said to be a \emph{closest \mbox{$\mathcal{P}$-state}} (C$\mathcal{P}$S) to~$\rho$. Since $S(\rho\| \sigma)\geq 0$ for all matrices $\rho,\sigma\in \H_{n,+}$, and $S(\rho\| \sigma)=0$ if and only if $\rho=\sigma$, the REE of any state $\rho\in\mathcal{P}$ vanishes.  Thus, the more interesting cases to analyze occur when~$\rho\not\in\mathcal{P}$. 

Before finding all states $\rho\not\in\mathcal{P}$ that have the given state $\sigma^\star\in\mathcal{P}$ as a C$\mathcal{P}$S, we present some useful facts. Note that the function $f_\rho$ is \emph{proper} on $\mathcal{P}$ for all states~$\rho$. That is, there always exists a state $\sigma\in\mathcal{P}$ such that $f_\rho(\sigma)$ is finite. Indeed, the maximally mixed state $\tfrac{1}{n}\mathds{1}$ is always in $\mathcal{P}$ and $\Tr[\rho\log(\tfrac{1}{n}\mathds{1})]$ is finite for all $\rho\in\H_{n,+}$.  If a state $\sigma^\star\in\mathcal{P}$ is a C$\mathcal{P}$S to a state $\rho\not\in\mathcal{P}$, and $\sigma^\star$ is not full-rank, then $\rho$ must be zero outside of the support of~$\sigma^\star$. Otherwise, $f_\rho(\sigma^\star)$ is not finite, and thus $\sigma^\star$ is not optimal. Furthermore, since $\mathcal{P}$ is compact and $f_\rho$ is bounded below, $f_\rho$ attains its minimum value over $\mathcal{P}$ for all states $\rho$. 

We also note that a C$\mathcal{P}$S of any non-PPT state $\rho\in\mathcal{P}$ must be on the boundary of $\mathcal{P}$ (see~\cite[Cor. 2]{Friedland2011}). This agrees with the notion of the relative entropy as being a distance-like measure on the space of states. Note this fact is unique to the choice $g(x)=\log x$ in the minimization target \mbox{$f_\rho(\sigma)=-\Tr[\rho\, g(\sigma)]$}, since in this case we have $\frac{d}{dx}\log x=\frac{1}{x}$ and thus $L_{A}(A)=P_A$ for all matrices $A\geq 0$. 

We are now ready to state the criterion for a state \mbox{$\rho\not\in\mathcal{P}$} to have~$\sigma^\star$ as a C$\mathcal{P}$S. Since~$\sigma^\star$ is on the boundary of $\mathcal{P}$, there is a proper supporting functional at~$\sigma^\star$ defined by a matrix $\phi\in \H_n$ of the form $ \Tr[\phi\sigma] \leq \Tr[\phi\sigma^\star]=1$. Here the condition that $\phi$ be \emph{proper} means that there exists at least one $\sigma\in\mathcal{P}$ that is zero outside the support of~$\sigma^\star$ with $\Tr[\phi\sigma]<1$. Hence $\phi\neq P_{\sigma^\star}$, so we can restrict to supporting hyperplanes such that $\Tr[(P_{\sigma^\star}-\phi)^2]\neq 0$. For each such $\phi$, we construct the family of states
 \begin{equation}
 \label{eq:rhoform}
  \rho(\sigma^\star,\phi,x)=(1-x)\sigma^\star+xL_{\sigma^\star}^{\ddagger}(\phi), \, \,\,
  x\in(0,x_{\max}],
 \end{equation}
where, for a given~$\sigma^\star$ and supporting functional defined by $\phi$, the value $x_{\max}$ is the largest such that $\rho(\sigma^\star,\phi,x_{\max})$ has no negative eigenvalues and is thus a valid quantum state.  We consider case when $\sigma^\star$ is singular separately from when $\sigma^\star$ is full-rank. 

\begin{theorem}[Full-rank C$\mathcal{P}$S]
 Let~$\sigma^\star$ be a full-rank state be on the boundary of $\in \mathcal{P}$. Then~$\sigma^\star$ is a C$\mathcal{P}$S to a state $\rho\not\in\mathcal{P}$ if and only if~$\rho$ is of the form $\rho=\rho(\sigma^\star,\phi,x)$ in~\eqref{eq:rhoform}, where $\phi$ defines a supporting functional of $\mathcal{P}$ at~$\sigma^\star$ such that
\begin{equation}\label{eq:thmCPS_supportingfunctional}
 \Tr[\phi\sigma]\leq\Tr[\phi\sigma^\star]=1 \hspace{2mm}\text{for all } \sigma\in\mathcal{P}
\end{equation}
and $\Tr[(\mathds{1}-\phi)^2]=1$, and $x\in(0,x_{\max}]$ where $x_{\max}$ is the largest value of $x$ such that $\rho(\sigma^\star,\phi,x)$ is a positive semi-definite matrix. 
\end{theorem}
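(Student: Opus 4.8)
The plan is to reduce the entire statement to the corollary immediately preceding it, which already characterizes when $f_\rho(\sigma)=-\Tr[\rho\log\sigma]$ is minimized over $\mathcal{P}$ at $\sigma^\star$. First I would note that, because $S(\rho\|\sigma)=-S(\rho)+f_\rho(\sigma)$ and $-S(\rho)$ is constant in $\sigma$, the state $\sigma^\star$ is a C$\mathcal{P}$S to $\rho$ if and only if $\sigma^\star$ minimizes $f_\rho$ over $\mathcal{P}$. Since $\sigma^\star$ is full-rank, every eigenvalue lies in $(0,\infty)$ where $g'(x)=1/x\neq 0$, so $P_{\sigma^\star}=\mathds{1}$ and $L_{\sigma^\star}=D_{\log,\sigma^\star}$ is invertible with $L_{\sigma^\star}^\ddagger=L_{\sigma^\star}^{-1}$. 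The preceding corollary then says that $\sigma^\star$ minimizes $f_\rho$ exactly when $\rho=L_{\sigma^\star}^{-1}(\psi)$ for some $\psi\in\H_n$ defining a supporting functional of $\mathcal{P}$ at $\sigma^\star$, subject to $\rho\geq 0$.

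Next I would pin down the normalization. Using self-adjointness of $L_{\sigma^\star}^{-1}$ together with the identity $L_{\sigma^\star}^{-1}(\mathds{1})=\sigma^\star$ recorded in the previous subsection, one computes $\Tr\rho=\Tr[L_{\sigma^\star}^{-1}(\psi)]=\Tr[\psi\,L_{\sigma^\star}^{-1}(\mathds{1})]=\Tr[\psi\sigma^\star]$. Hence the constraint $\Tr\rho=1$ forces the supporting functional to have constant $\Tr[\psi\sigma^\star]=1$, which is precisely the normalization in eq.~\eqref{eq:thmCPS_supportingfunctional}.

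The third step converts the single functional $\psi$ into the pair $(\phi,x)$. Since $\rho\notin\mathcal{P}$ we have $\rho\neq\sigma^\star$, hence $\psi\neq\mathds{1}$ and $\Tr[(\mathds{1}-\psi)^2]>0$. I would set $x=\sqrt{\Tr[(\mathds{1}-\psi)^2]}>0$ and $\phi=\mathds{1}+(\psi-\mathds{1})/x$; a direct check gives $\Tr[(\mathds{1}-\phi)^2]=1$ and $\psi=(1-x)\mathds{1}+x\phi$. Because $\phi$ is the affine combination $(1-1/x)\mathds{1}+(1/x)\psi$ with positive coefficient $1/x$ on $\psi$, and both $\mathds{1}$ and $\psi$ satisfy $\Tr[\,\cdot\,\sigma]\leq 1=\Tr[\,\cdot\,\sigma^\star]$ on $\mathcal{P}$ (here using $\Tr\sigma=1$ for density matrices $\sigma\in\mathcal{P}$), the rescaled $\phi$ again satisfies $\Tr[\phi\sigma]\leq\Tr[\phi\sigma^\star]=1$. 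Substituting and using $\sigma^\star=L_{\sigma^\star}^{-1}(\mathds{1})$ yields
\[
\rho=L_{\sigma^\star}^{-1}\big((1-x)\mathds{1}+x\phi\big)=(1-x)\sigma^\star+xL_{\sigma^\star}^{-1}(\phi)=\rho(\sigma^\star,\phi,x),
\]
which is the claimed form of eq.~\eqref{eq:rhoform}, and $\rho\geq 0$ forces $x\leq x_{\max}$ by definition of $x_{\max}$. For the converse I would run these equalities backwards: any $\rho=\rho(\sigma^\star,\phi,x)$ with $\phi$ as in the statement equals $L_{\sigma^\star}^{-1}(\psi)$ with $\psi=(1-x)\mathds{1}+x\phi$ a supporting functional of constant $1$, and $x\leq x_{\max}$ guarantees $\rho\geq 0$, so the corollary gives that $\sigma^\star$ minimizes $f_\rho$. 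It remains to check $\rho\notin\mathcal{P}$: if $\rho\in\mathcal{P}$ then $S(\rho\|\sigma^\star)=0$ would force $\sigma^\star=\rho$, whence $\phi=\mathds{1}$, contradicting $\Tr[(\mathds{1}-\phi)^2]=1$.

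The hard part — and essentially the only step beyond bookkeeping — is the normalization argument of the third paragraph: verifying that the rescaling $\phi=\mathds{1}+(\psi-\mathds{1})/x$ preserves the supporting-functional property with the \emph{same} constant $1$ while achieving $\Tr[(\mathds{1}-\phi)^2]=1$, thereby producing an exact correspondence between admissible functionals $\psi$ and pairs $(\phi,x)$, with $\psi=\mathds{1}$ (equivalently $\rho=\sigma^\star\in\mathcal{P}$) the unique excluded case. Everything else is a direct translation of the preceding corollary once invertibility of $L_{\sigma^\star}$ and the identity $L_{\sigma^\star}^{-1}(\mathds{1})=\sigma^\star$ are invoked.
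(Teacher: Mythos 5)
Your proposal is correct and follows essentially the route the paper intends: it reduces the statement to the preceding corollary via the invertibility of $L_{\sigma^\star}$ and the identity $L_{\sigma^\star}^{-1}(\mathds{1})=\sigma^\star$, and its key normalization step --- that $\phi=\mathds{1}+(\psi-\mathds{1})/x$ remains a supporting functional with constant $1$ for any $x>0$ --- is precisely the fact the paper highlights ($\phi(t)=t\phi+(1-t)P_{\sigma^\star}$ supports $\mathcal{P}$ at $\sigma^\star$ for all $t>0$ when $P_{\sigma^\star}=\mathds{1}$) when deferring the full argument to Friedland and Gour. The bookkeeping (trace normalization via self-adjointness, $x\le x_{\max}$ from convexity of the positivity interval, and excluding $\rho\in\mathcal{P}$ via $\phi\neq\mathds{1}$) is all sound.
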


The proof of this (see~\cite[Thm. 3]{Friedland2011}) relies on the fact that $P_{\sigma^\star}=\mathds{1}$ if $\sigma^\star$ is full-rank. Namely, if a matrix $\phi\in H_n$ defines a proper supporting functional of $\mathcal{P}$ at $\sigma^\star>0$, then so does $\phi(t)=t\phi+(1-t)P_{\sigma^\star}$ for all $t>0$. In the case of singular $\sigma^\star$, this is only true in general for $t\in (0,1]$. Thus, only a sufficient condition regarding when a state $\rho$ has~$\sigma^\star$ as a closest $\mathcal{P}$-state may be given (see~\cite[Thm. 7]{Friedland2011}).

\begin{theorem}[Singular C$\mathcal{P}$S]
 Let~$\sigma^\star$ be a singular state be on the boundary of $\in \mathcal{P}$.  Then~$\sigma^\star$ is a C$\mathcal{P}$S to all states of the form $\rho=\rho(\sigma^\star,\phi,x)$ in \eqref{eq:rhoform}, where $\phi$ defines a supporting functional of $\mathcal{P}$ at~$\sigma^\star$ of the form in \eqref{eq:thmCPS_supportingfunctional} such that $\Tr[(P_{\sigma^\star}-\phi)^2]=1$, and $x\in(0,\tilde x_{\max}]$ where $  \tilde x_{\max} = \min\{1,x_{\max}\}$ and $x_{\max}$ is the largest value of $x$ such that $\rho(\sigma^\star,\phi,x)$ is a positive semi-definite matrix. 
\end{theorem}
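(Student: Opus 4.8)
The plan is to reduce the claim to the supporting-functional criterion of Theorem~\ref{thm:maintheorem_secondary}, specialized to $g=\log$ and $L_{\sigma^\star}=D_{\log,\sigma^\star}$. First I would observe that minimizing $S(\rho\| \sigma)$ over $\mathcal{P}$ is the same as minimizing $f_\rho(\sigma)=-\Tr[\rho\log\sigma]$, so it suffices to show that $\sigma^\star$ minimizes $f_\rho$ for each $\rho=\rho(\sigma^\star,\phi,x)$. Since $L_{\sigma^\star}^\ddagger$ annihilates everything outside the support of $\sigma^\star$, the matrix $\rho$ is supported on the support of $\sigma^\star$, so $f_\rho(\sigma^\star)$ is finite and $\sigma^\star\in\dom f_\rho$. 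The condition $\Tr[(P_{\sigma^\star}-\phi)^2]=1$ merely guarantees $\phi\neq P_{\sigma^\star}$, i.e. that the functional is nonconstant on $\mathcal{P}$ in the sense of Corollary~\ref{cor:boundary}.

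Next I would rewrite $\rho$ in a form adapted to $L_{\sigma^\star}$. Using the special identity $L_{\sigma^\star}^\ddagger(P_{\sigma^\star})=\sigma^\star$ one has
\[
\rho=(1-x)\sigma^\star+xL_{\sigma^\star}^\ddagger(\phi)=L_{\sigma^\star}^\ddagger(\phi(x)),\qquad \phi(x):=x\phi+(1-x)P_{\sigma^\star}.
\]
Before anything else I would check that $\rho$ is a genuine density matrix: positivity holds exactly for $x\le x_{\max}$ by definition of $x_{\max}$, while $\Tr\rho=1$ follows from self-adjointness of $L_{\sigma^\star}^\ddagger$ together with $L_{\sigma^\star}^\ddagger(\mathds{1})=\sigma^\star$ and $\Tr[\phi\sigma^\star]=1$, which give $\Tr[L_{\sigma^\star}^\ddagger(\phi)]=\Tr[\sigma^\star\phi]=1$. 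Applying $L_{\sigma^\star}$ and using $L_{\sigma^\star}(L_{\sigma^\star}^\ddagger(\phi(x)))=P_{\sigma^\star}\phi(x)P_{\sigma^\star}$ then yields $L_{\sigma^\star}(\rho)=P_{\sigma^\star}\phi(x)P_{\sigma^\star}$, so by Theorem~\ref{thm:maintheorem_secondary} the whole statement comes down to showing that this matrix defines a supporting functional of $\mathcal{P}$ maximized at $\sigma^\star$.

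The heart of the proof is the convex-combination lemma: if $\phi$ is a supporting functional at $\sigma^\star$, then so is $\phi(x)=x\phi+(1-x)P_{\sigma^\star}$ for every $x\in(0,1]$. Indeed $\Tr[\phi(x)\sigma^\star]=x\cdot 1+(1-x)\Tr[P_{\sigma^\star}\sigma^\star]=1$, while for $\sigma\in\mathcal{P}$
\[
\Tr[\phi(x)\sigma]=x\,\Tr[\phi\sigma]+(1-x)\,\Tr[P_{\sigma^\star}\sigma]\le x\cdot 1+(1-x)\cdot 1=1,
\]
and this estimate holds precisely because both coefficients $x$ and $1-x$ are nonnegative, together with $\Tr[P_{\sigma^\star}\sigma]\le 1$. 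This is exactly the step that forces the cutoff $x\le 1$, hence $\tilde x_{\max}=\min\{1,x_{\max}\}$; in the full-rank case $P_{\sigma^\star}=\mathds{1}$ gives $\Tr[P_{\sigma^\star}\sigma]=1$, so the same bound survives for all $x>0$ and one recovers an equivalence. When $\phi$ is supported on the support of $\sigma^\star$ the compression is trivial, $P_{\sigma^\star}\phi(x)P_{\sigma^\star}=\phi(x)$, and the lemma delivers the required inequality $\Tr[L_{\sigma^\star}(\rho)\sigma]\le 1=\Tr[L_{\sigma^\star}(\rho)\sigma^\star]$ directly, after which Theorem~\ref{thm:maintheorem_secondary} finishes the argument.

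The step I expect to be the main obstacle is controlling the compression $P_{\sigma^\star}\phi(x)P_{\sigma^\star}=L_{\sigma^\star}(\rho)$ for a general supporting $\phi$ that is not supported on $\sigma^\star$. Because $\rho$ depends only on $L_{\sigma^\star}^\ddagger(\phi)=L_{\sigma^\star}^\ddagger(P_{\sigma^\star}\phi P_{\sigma^\star})$, the criterion only ever sees $P_{\sigma^\star}\sigma P_{\sigma^\star}$, and the normalized compression of a PPT state need not be PPT; thus the plain inequality $\Tr[\phi\,P_{\sigma^\star}\sigma P_{\sigma^\star}]\le 1$ does not follow automatically from $\phi$ being a supporting functional. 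This is exactly why the singular case yields only a sufficient condition: once $t$ is restricted to $(0,1]$ the family $t\mapsto\phi(t)$ no longer sweeps out all supporting functionals, in contrast to the full-rank situation, so some states having $\sigma^\star$ as a closest $\mathcal{P}$-state are not captured by $\rho(\sigma^\star,\phi,x)$.
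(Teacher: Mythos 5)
Your route is the one the paper itself indicates (it omits the proof, deferring to \cite{Friedland2011}, but states the key mechanism): write $\rho=L_{\sigma^\star}^{\ddagger}(\phi(x))$ with $\phi(x)=x\phi+(1-x)P_{\sigma^\star}$, show that $\phi(x)$ is still a supporting functional of $\mathcal{P}$ at $\sigma^\star$ because $\Tr[P_{\sigma^\star}\sigma]\leq 1$ for every state $\sigma$, and observe that the nonnegativity of the coefficient $1-x$ is exactly what forces the cutoff $\tilde x_{\max}=\min\{1,x_{\max}\}$. The bookkeeping ($\Tr\rho=1$, the support of $\rho$, $L_{\sigma^\star}(\rho)=P_{\sigma^\star}\phi(x)P_{\sigma^\star}$, the reduction to Theorem~\ref{thm:maintheorem_secondary}) is all correct.

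The gap is the one you flag yourself in the final paragraph, and it is genuine: the criterion of Theorem~\ref{thm:maintheorem_secondary} requires that $L_{\sigma^\star}(\rho)=P_{\sigma^\star}\phi(x)P_{\sigma^\star}$ --- not $\phi(x)$ --- define a supporting functional, i.e.\ that $\Tr[\phi(x)\,P_{\sigma^\star}\sigma P_{\sigma^\star}]\leq 1$ for all $\sigma\in\mathcal{P}$. Your lemma only delivers $\Tr[\phi(x)\sigma]\leq 1$, and the difference $\Tr[\phi(x)(P_{\sigma^\star}\sigma P_{\sigma^\star}-\sigma)]$ has no definite sign for a general supporting $\phi$: the functionals in eq.~\eqref{eq:PPThyperplane} have $\mathds{1}$ rather than $P_{\sigma^\star}$ as their leading term, and the compression $P_{\sigma^\star}\sigma P_{\sigma^\star}$ of a PPT state need not be PPT. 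So as written your argument proves the theorem only under the additional hypothesis $\phi=P_{\sigma^\star}\phi P_{\sigma^\star}$. That restriction is almost certainly the intended reading --- the normalization $\Tr[(P_{\sigma^\star}-\phi)^2]=1$ and the paper's notion of a ``proper'' functional (tested against $\sigma\in\mathcal{P}$ supported inside $\mathrm{supp}\,\sigma^\star$) both point to it --- but it is not stated in the theorem, so you must either impose it explicitly or supply the missing inequality. One smaller correction: keep the two issues separate. The reason the singular case gives only a \emph{sufficient} condition is, as the paper says, that $t\mapsto t\phi+(1-t)P_{\sigma^\star}$ supports $\mathcal{P}$ only for $t\in(0,1]$, so the family $\rho(\sigma^\star,\phi,x)$ need not exhaust all states with $\sigma^\star$ as a C$\mathcal{P}$S; the compression problem is a separate matter that threatens the sufficiency claim itself rather than explaining the absence of necessity.
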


For a given~$\sigma^\star$ on the boundary of $\mathcal{P}$, these conditions allow us to construct families of non-PPT states that have~$\sigma^\star$ as a C$\mathcal{P}$S. Indeed, each matrix~$\phi$ defining a supporting functional of the form in eq.~\eqref{eq:thmCPS_supportingfunctional} with $\Tr[(P_{\sigma^\star}-\phi)^2]=1$  determines a different family.  Fortunately, the boundary and the supporting functionals of the set of PPT matrices are easy to characterize. All matrices $\phi\in\H_n$ that define supporting functionals of $\mathcal{P}$ are of the form
\begin{equation}
\label{eq:PPThyperplane}
 \phi=\mathds{1}-\sum_{i\,:\,\mu_i=0}a_i\ket{\varphi_i}\bra{\varphi_i}^\Gamma
\end{equation}
with each $a_i\geq0$, where $\sigma^{\star\Gamma}=\sum\mu_i\ket{\varphi_i}\bra{\varphi_i}$ is the spectral decomposition of the partial transpose of~$\sigma^\star$. The partial transpose operation is a self-adjoint linear operator on matrices with respect to the Hilbert-Schmidt inner product, so for each $\sigma\in\mathcal{P}$
\begin{align*}
 \Tr[\ket{\varphi_i}\bra{\varphi_i}^\Gamma\sigma] &= \bra{\varphi_i}\sigma^\Gamma\ket{\varphi_i}\geq 0,
\end{align*}
since $\sigma^\Gamma\geq0$, and so $\Tr[\phi\sigma]\leq 1$ for all $\sigma\in\mathcal{P}$, and $\Tr[\phi\sigma^\star]=1$. 

If~$\sigma^\star$ is full-rank, then~$\sigma^\star$ is on the boundary of $\mathcal{P}$ if its partial transpose $\sigma^{\star\Gamma}$ has at least one zero eigenvector. The supporting functional of  $\mathcal{P}$ at~$\sigma^\star$ of this form is unique if $\sigma^{\star\Gamma}$ has exactly one zero eigenvector. Otherwise there exists a range of supporting functionals of $\mathcal{P}$ at~$\sigma^\star$ and thus there is a range of families of entangled states $\rho(\sigma^\star,\phi,x)$ that have~$\sigma^\star$ as a C$\mathcal{P}$S. 

Furthermore, if $\rho\not\in\mathcal{P}$ is full-rank, then its C$\mathcal{P}$S is unique. This fact is due to the strong concavity of the matrix function $\log$ and is proven in~\cite{Friedland2011}. 

For a state $\rho\not\in\mathcal{P}$ that has $\sigma^\star\in\mathcal{P}$ as a closest \mbox{$\mathcal{P}$-state}, a closed form expression for the relative entropy of entanglement can be given by
\begin{equation}
 E_\mathcal{P}(\rho)
 =-S(\rho)-\Tr[\phi(x) \sigma^\star \log(\sigma^\star)]
\end{equation}
where $\phi(x)=(1-x)\mathds{1}+x\phi$ and $\rho=\rho(\sigma^\star,\phi,x)$ is of the form in~\eqref{eq:rhoform}.


\subsection{Additivity of the relative entropy of entanglement}
It is of great importance to determine conditions for when the relative entropy of entanglement is weakly additive. These are states $\rho$ for which the relative entropy of entanglement is equal to its regularized version,
\[
 E_\mathcal{P}(\rho) = E_\mathcal{P}^\infty(\rho).
\]
Indeed, the regularized version has been shown to be the unique measure of entanglement in a reversible theory of entanglement~\cite{Brand2010}, although it is much more difficult to compute. Given a state $\rho$ and a PPT state $\sigma^\star$ such that $\sigma^\star$ is a C$\mathcal{P}$S to~$\rho$, it has been shown~\cite{Rains1999,Rains2000} that $E_\mathcal{P}(\rho)$ is weakly additive if 
\begin{equation}\label{eq:commutativity}[\rho,\sigma^\star]=0
\hspace{5mm}\text{and}\hspace{5mm}
\left(\rho{\sigma^{\star}}^{-1}\right)^\Gamma\geq \mathds{1}.
\end{equation} 
However, the state $\rho$ may be given by $\rho=(1-x)\sigma^\star+xL_{\sigma^\star}^{\ddagger}(\phi)$ for some matrix $\phi$ that defines a supporting functional. So the condition in \eqref{eq:commutativity} may be reduced to 
\begin{equation*}[L_{\sigma^\star}^{\ddagger}(\phi),\sigma^\star]=0
\hspace{5mm}\text{and}\hspace{5mm}
\left(L_{\sigma^\star}^{\ddagger}(\phi){\sigma^{\star}}^{-1}\right)^\Gamma\geq \mathds{1}.
\end{equation*}
However, we have $[L_{\sigma^\star}^{\ddagger}(\phi),\sigma^\star]=0$ if and only if $[\phi,\sigma^\star]=0$, so the conditions for weak additivity can be stated as
\begin{equation}\label{eq:commutativity2}[\phi,\sigma^\star]=0
\hspace{5mm}\text{and}\hspace{5mm}
\left(L_{\sigma^\star}^{\ddagger}(\phi){\sigma^{\star}}^{-1}\right)^\Gamma\geq \mathds{1}.
\end{equation} 
Hence, we have reduced the task of finding states $\rho$ for which  $E_\mathcal{P}(\rho)$ is weakly additive to finding states $\sigma^\star$ on the boundary of $\mathcal{P}$ and a corresponding supporting hyperplane of $\mathcal{P}$ at $\sigma^\star$ defined by $\phi$ that satisfy the conditions in \eqref{eq:commutativity2}. In particular, in the two-qubit case, it has been shown~\cite{Miranowicz2008} that  $E_\mathcal{P}(\rho)$ is weakly additive for all states $\rho$ that commute with a C$\mathcal{P}$S state $\sigma^\star$.

\section{Rains bound}
\label{sec:Rains}
\label{sec:rains}
Although the results in the previous section regarding the relative entropy of entanglement have been shown previously~\cite{Friedland2011}, we present here new results using these methods to investigate a similar quantity, the so-called \emph{Rains bound}~\cite{Rains1999,Rains2001}. This quantity was originally defined as
\begin{equation}
\label{eq:RainsOrigin}
 R(\rho)=\min_{\sigma}S(\rho\| \sigma)+\log\Tr\abs{\sigma^\Gamma},
\end{equation}
where the minimization is taken over \emph{all} normalized states~$\sigma\in \H_{n,+,1}$ rather than just the PPT states. While still an upper bound to the distillible entanglement, the Rains bound is also bounded above by $E_\mathcal{D}$, so it is an improved bound over the relative entropy of entanglement.

The function that is optimized in~\eqref{eq:RainsOrigin} is the sum of the relative entropy of~$\rho$ and~$\sigma$ and the \emph{logarithmic negativity}~\cite{Plenio2005}, defined as 
\[
 LN(\sigma)=\log\Tr\abs{\sigma^\Gamma}.
\]
 For a given matrix $A\in M_n$, we have~$\abs{A}=\sqrt{A^\dagger A}$ and the trace of~$\abs{A}$ is equal to the sum of the singular values of~$A$~\cite{Bhatia1997,Horn1991}. If $A$ is hermitian, than the singular values of $A$ are the absolute values of the eigenvalues of $A$. If a state~$\rho$ is positive under partial transposition then its logarithmic negativity vanishes, since $\Tr\abs{\rho^\Gamma}=\Tr[\rho]=1$ if $\rho^\Gamma=\abs{\rho^\Gamma}$. Thus the Rains bound itself vanishes on the PPT states. As a function of~$\rho$, the logarithmic negativity is an entanglement measure with the peculiar property that it is not convex~\cite{Plenio2005}. Convexity is needed, however, to be able to perform convex optimization analysis.

It was subsequently pointed out~\cite{Audenaert2002} that the evaluation of the Rains bound can be recast in terms of a convex optimization problem. We define the convex set~$\mathcal{T}\subset \H_{n,+}$ as
\begin{equation}
 \mathcal{T}=\left\{\tau\in \H_{n,+}\,\big|\,\Tr\abs{\tau^\Gamma}\leq 1\right\}.
\end{equation}
 The matrices in $\mathcal{T}$ represent \emph{subnormalized} states, since $\Tr[\tau]\leq\Tr\abs{\tau^\Gamma}\leq 1$ for each $\tau\in\mathcal{T}$, and a matrix $\tau\in\mathcal{T}$ has $\Tr[\tau]=1$ if and only if $\tau^\Gamma\geq0$, i.e. $\tau$ is PPT.  Furthermore, the set $\mathcal{T}$ is indeed convex since, for the convex combinations of two matrices $\tau,\tau'\in\mathcal{T}$, we have that $\Tr\abs{(1-t)\tau^\Gamma+t\tau'^\Gamma}\leq (1-t)\Tr\abs{\tau^\Gamma}+t\Tr\abs{\tau'^\Gamma}\leq 1$ for all $t\in[0,1]$, and thus $(1-t)\tau+t\tau'\in\mathcal{T}$.

With this, the Rains bound can then be restated in a manner analogous to $E_\mathcal{D}$ and $E_\mathcal{P}$ as
\begin{equation}
 R(\rho)=E_\mathcal{T}=\min_{\tau\in\mathcal{T}}S(\rho\| \tau).
\end{equation}
For a fixed~$\rho$, this is indeed a convex function of $\tau$ since~$S(\rho\| \tau)$ is jointly convex in its arguments regardless of the normalization of~$\rho$ and $\tau$~\cite{Ohya1993}.  We can now address the question as to when a matrix $\tau^\star\in\mathcal{T}$ minimizes the Rains bound for a given state~$\rho$.  That is,
\begin{equation*}
 S(\rho\| \tau^\star)\leq  S(\rho\| \tau) \text{ for all }\tau\in \mathcal{T},
 \label{eq:closestP2}
\end{equation*}
so that $\tau^\star$ satisfies $R(\rho)=S(\rho\| \tau^\star)$.   For a given~$\rho$, we investigate the necessary and sufficient conditions that must be satisfied for $\tau^\star$ to minimize the Rains bound for~$\rho$.

Using the same methods from section \ref{sec:REE} for determining the conditions for when a matrix $\sigma^\star \in\mathcal{P}$ optimizes the REE, namely by characterizing the supporting functionals of a convex set, similar conditions can be shown for the minimization of the Rains bound. The requirements for a matrix $\phi\in\H_{n}$ to define a supporting functional are outlined in section~\ref{sec:RainsSepHypPlane}, while the necessary and sufficient conditions for a matrix $\tau^\star\in\mathcal{T}$ to minimize $E_\mathcal{T}(\rho)$ for a state $\rho\in\H_{n,+,1}$ are given in section~\ref{sec:RainsConditions}.  We subsequently use these results to compare the REE and the Rains bound in section~\ref{sec:REEcompareRains}.

\subsection{Supporting functionals of \texorpdfstring{$\mathcal{T}$}{T}}
\label{sec:RainsSepHypPlane}

Recall that, for a matrix $\tau^\star\in\mathcal{T}$, finding all matrices~$\rho\in\H_{n,+}$ such that $\tau^\star$ minimizes $f_\rho(\tau)=-\Tr[\rho g(\tau)]$ over $\mathcal{T}$ is reduced to finding the supporting functionals of $\mathcal{T}$ at $\tau^\star$. These are given by a matrix  $\phi\in\H_{n}$ such that
\[
 \Tr[\phi\tau]\leq \Tr[\phi\tau^\star] \,\,\textrm{ for all }\tau\in\mathcal{T}.
\]
Note that for each $\tau\in\mathcal{T}$, its partial transpose $\tau^\Gamma$ is an element of the unit ball $B_n^1\subset\H_n$
\begin{equation*}
 B_n^1= \left\{\alpha\in\H_n\,\big|\,\norm{\alpha}_1\leq 1\right\},
\end{equation*} 
with respect to the Schatten-1 norm  $\norm{\alpha}_1=\Tr\abs{\alpha}$~\cite{Bhatia1997}. The unit ball is also a convex set, and characterizing all supporting functionals of $B_n^1$ at will assist us in finding the supporting functionals of $\mathcal{T}$. In particular, each matrix $\omega\in\H_n$ whose maximum singular value is equal to 1 defines a supporting functional of $B_n^1$, since
\[
 \Tr[\omega\alpha]\leq 1 \hspace{5mm}\text{for all }\alpha\in B_n^1,
\]
and $\Tr[\omega\alpha]$ achieves the maximum value of 1 for some $\alpha$ on the boundary of $B_n^1$~\cite{Horn1985}. 

Recall that a matrix $P\in\H_n$ is called an \emph{orthogonal projection} if $P^2=P$, and that two orthogonal projections $P_1$ and $P_2$ are said to be \emph{disjoint} if $P_1P_2=P_2P_1=0$, i.e. the intersections of the ranges of $P_1$ and $P_2$ is the zero vector.  Note that the sum of the ranks of two orthogonal projections $P_1$ and $P_2$ is at most $n$.
\begin{lemma}\label{lemma:projectors}
 Let $\alpha\in \H_{n}$ and let~$P_1$ and~$P_2$ be orthogonal projections with ranks $p_1$ and $p_2$ respectively.  Then, for $p=p_1+p_2$,
 \begin{equation}
   \Tr(P_1 \alpha )- \Tr(P_2 \alpha )\leq \sum_{i=1}^p s_i(\alpha),
  \label{eq:projectors}
 \end{equation}
 where $s_i(\alpha)$ are the singular values of $\alpha$ listed in decreasing order.   Equality in \eqref{eq:projectors} holds if and only if~$P_1$ and~$P_2$ are projections onto subspaces corresponding to the nonnegative and nonpositive eigenvalues of $\alpha$ respectively and the absolute values of the  corresponding eigenvalues are the largest $p$ singular values of $\alpha$.
\end{lemma}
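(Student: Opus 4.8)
The plan is to collapse the two-sided expression into a single Hermitian matrix by setting $M = P_1 - P_2$, so the left-hand side of \eqref{eq:projectors} becomes $\Tr(M\alpha)$. Two elementary facts about $M$ then drive everything. First, its operator norm satisfies $\norm{M}\le 1$: for any unit vector $\ket{v}$ we have $\bra{v}M\ket{v} = \bra{v}P_1\ket{v} - \bra{v}P_2\ket{v}\in[-1,1]$ since each term lies in $[0,1]$. Second, $\mathrm{rank}(M)\le \mathrm{rank}(P_1)+\mathrm{rank}(P_2)=p$. Hence the singular values obey $s_i(M)\le 1$ for all $i$ and $s_i(M)=0$ for $i>p$, and the inequality follows in one line from von Neumann's trace inequality,
\begin{equation*}
 \Tr(M\alpha)\le\sum_{i=1}^n s_i(M)\,s_i(\alpha)\le\sum_{i=1}^p s_i(\alpha),
\end{equation*}
the last step using $0\le s_i(M)\le 1$, $s_i(M)=0$ for $i>p$, and $s_i(\alpha)\ge 0$. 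Note that disjointness of $P_1,P_2$ is not needed.

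For the equality analysis, which is the more delicate part, I would instead expand directly in the eigenbasis of $\alpha$, as this exposes exactly which inequalities must be tight. Writing $\alpha=\sum_k\lambda_k\ket{u_k}\bra{u_k}$ and setting $a_k=\bra{u_k}P_1\ket{u_k}$, $b_k=\bra{u_k}P_2\ket{u_k}$, $c_k=a_k-b_k$, one has $a_k,b_k\in[0,1]$, $\sum_k a_k=p_1$, $\sum_k b_k=p_2$, and (from $\norm{M}\le1$) $c_k\in[-1,1]$ with $\abs{c_k}\le a_k+b_k$, so $\sum_k\abs{c_k}\le p$. Then
\begin{equation*}
 \Tr(P_1\alpha)-\Tr(P_2\alpha)=\sum_k\lambda_k c_k\le\sum_k\abs{\lambda_k}\,\abs{c_k}\le\sum_{i=1}^p s_i(\alpha),
\end{equation*}
where the last step is a fractional-knapsack bound: maximizing $\sum_k\abs{\lambda_k}\abs{c_k}$ subject to $\abs{c_k}\le 1$ and $\sum_k\abs{c_k}\le p$ puts full weight on the $p$ eigenvalues of largest modulus.

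For equality I would argue that the first inequality forces $\mathrm{sign}(c_k)=\mathrm{sign}(\lambda_k)$ whenever $\lambda_k\neq0$, while the knapsack step forces $\abs{c_k}=1$ on the eigenvectors realizing the $p$ largest singular values and $\abs{c_k}=0$ elsewhere. Combining $\abs{c_k}=1$ with $a_k,b_k\in[0,1]$ pins down $a_k=1,b_k=0$ when $\lambda_k>0$ and $a_k=0,b_k=1$ when $\lambda_k<0$, i.e.\ those eigenvectors lie in the range of $P_1$ or $P_2$ according to the sign of $\lambda_k$; the remaining eigenvectors satisfy $a_k=b_k$, which together with $\sum_k a_k=p_1$, $\sum_k b_k=p_2$ forces them to $0$ and pins the counts to $p_1$ positive and $p_2$ negative selected eigenvalues. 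This yields the stated characterization. The main obstacle I anticipate is the bookkeeping at the boundary: zero eigenvalues, which may be assigned to either projection without changing the trace, and ties among singular values at the cutoff $s_p=s_{p+1}$, which make the optimal set of eigenvectors non-unique. These degenerate cases are exactly what the ``nonnegative''/``nonpositive'' phrasing is meant to absorb, and I would handle them by fixing an ordering of the spectral decomposition in which the selected eigenvalues are consistently a set of $p$ of largest modulus.
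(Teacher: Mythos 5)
Your proposal is correct, but it takes a different route from the paper: the paper does not actually prove the lemma, it simply cites the variational characterization of sums of singular values (Horn and Johnson, \emph{Topics in Matrix Analysis}, Thm.\ 3.4.1), whereas you give a self-contained two-step argument. Your inequality step — set $M=P_1-P_2$, observe $\norm{M}_\infty\le 1$ and $\mathrm{rank}(M)\le p$, and invoke von Neumann's trace inequality $\Tr(M\alpha)\le\sum_i s_i(M)s_i(\alpha)$ — is clean and correct, and your remark that disjointness of $P_1,P_2$ is not needed for the inequality is accurate (the lemma as stated does not assume it). Your equality analysis via the eigenbasis of $\alpha$ and the fractional-knapsack bound on $\sum_k\abs{\lambda_k}\abs{c_k}$ is also sound: tightness of the triangle-inequality step pins the signs, tightness of the knapsack step pins $\abs{c_k}\in\{0,1\}$ with exactly $p$ ones (modulo the degeneracies you flag), and the counting argument $\sum_k a_k=p_1$, $\sum_k b_k=p_2$ together with $\bra{u_k}P_1\ket{u_k}=1\Leftrightarrow P_1\ket{u_k}=\ket{u_k}$ forces the ranges of $P_1$ and $P_2$ to be spanned by the selected eigenvectors. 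What your approach buys is precisely the part the paper's citation leaves implicit: an explicit derivation of the equality conditions, including the boundary cases (zero eigenvalues, ties at $s_p=s_{p+1}$) that the lemma's ``nonnegative/nonpositive'' phrasing is designed to absorb. The only imprecision worth tightening is that the first inequality forces $\mathrm{sign}(c_k)=\mathrm{sign}(\lambda_k)$ only when $\lambda_k c_k\neq 0$, not for all $\lambda_k\neq0$; this does not affect the conclusion since the knapsack step determines which $c_k$ vanish.
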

\begin{proof}
 See, e.g., Thm. 3.4.1, p. 195, in~\cite{Horn1991}.
\end{proof}
Suppose that $\norm{\alpha}_1=1$, that $\alpha$ has at most $p$ nonzero singular values and that~$P_1$ and~$P_2$ are projections onto subspaces corresponding to the nonnegative and nonpositive eigenvalues of $\alpha$. Then the matrix $\phi=P_1-P_2$ defines a supporting functional of $B_n^1$ at $\alpha$. Indeed, we have $\Tr[\phi \alpha]=1$ and $\Tr[\phi \beta]\leq \norm{\beta}_1\leq 1$ for all other $\beta\in B_n^1$.

\begin{theorem}
 Let $\alpha\in\H_{n}$ be a matrix on the boundary of~$B_n^1$, namely $\norm{\alpha}_1=\sum^n_{i=1}s_i(\alpha)=1$.  A linear functional $\Phi:H_n\rightarrow\RR$ given by $\Phi(\beta)=\Tr[\phi\beta]$ for some matrix $\phi\in\H_n$ is a supporting functional of $B_n^1\subset\H_n$ at~$\alpha$  such that
 \begin{equation}
 \label{eq:rainsbound_Bsuppfunctional}
\Tr[\phi\beta]\leq\Tr[\phi\alpha]=1\,\,\,\textrm{ for all }\beta\in B_n^1,                                                                                                                                                                                                                                                                                                                                                                                                                 \end{equation}
if and only if $\phi$ is a convex combination of matrices of the form $P_1-P_2$ such that
 \[
  \Tr[(P_1-P_2)\beta]\leq \Tr[(P_1-P_2)\alpha]=1,
 \]
where~$P_1$ and~$P_2$ are disjoint orthogonal projection matrices as in Lemma \ref{lemma:projectors}. In particular, the supporting functional at $\alpha$ of the form in \eqref{eq:rainsbound_Bsuppfunctional} is unique if and only if $\alpha$ has no zero eigenvalues.
\end{theorem}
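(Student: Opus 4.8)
The plan is to identify the supporting functionals of $B_n^1$ at $\alpha$ with the equality case of the trace-norm/operator-norm duality, and then to decompose the resulting $\phi$ into the extremal blocks $P_1-P_2$. First I would record the dual characterization $\max_{\beta\in B_n^1}\Tr[\phi\beta]=\norm{\phi}_\infty$, where $\norm{\phi}_\infty$ is the largest singular value of $\phi$; this is H\"older duality between the Schatten-$1$ and Schatten-$\infty$ norms, and it is precisely the observation made just before the theorem that an $\omega$ with maximal singular value $1$ supports $B_n^1$. Consequently $\phi$ defines a supporting functional of the form \eqref{eq:rainsbound_Bsuppfunctional} if and only if $\norm{\phi}_\infty=1$ and $\Tr[\phi\alpha]=1=\norm{\alpha}_1$, so the whole theorem reduces to analysing the equality case $\Tr[\phi\alpha]=\norm{\phi}_\infty\norm{\alpha}_1$.

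The core step is this equality analysis. Writing the spectral decomposition $\alpha=\sum_i\lambda_i\ket{v_i}\bra{v_i}$ and estimating $\Tr[\phi\alpha]=\sum_i\lambda_i\bra{v_i}\phi\ket{v_i}\le\sum_i\abs{\lambda_i}\norm{\phi}_\infty=\norm{\alpha}_1$, equality forces $\bra{v_i}\phi\ket{v_i}=\operatorname{sign}(\lambda_i)=\pm1$ for every $i$ with $\lambda_i\neq0$. Since $-\mathds{1}\le\phi\le\mathds{1}$, attaining $\bra{v_i}\phi\ket{v_i}=\pm1$ forces $\ket{v_i}$ to be a $\pm1$-eigenvector of $\phi$ (because $\mathds{1}\mp\phi\ge0$ and $\bra{v_i}(\mathds{1}\mp\phi)\ket{v_i}=0$ together give $(\mathds{1}\mp\phi)\ket{v_i}=0$). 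Hence $\phi$ acts as $+\mathds{1}$ on the positive eigenspace $V_+$ of $\alpha$ and as $-\mathds{1}$ on the negative eigenspace $V_-$, and leaves the kernel $V_0$ invariant. Thus $\phi=P_{V_+}-P_{V_-}+\psi$ with $\psi$ Hermitian, supported on $V_0$, and $\norm{\psi}_\infty\le1$.

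The remaining task is to exhibit $\phi$ as a convex combination of the extremal blocks. Here I would use that the Hermitian operator-norm ball $\{\psi=\psi^\dagger,\ \norm{\psi}_\infty\le1\}$ on $V_0$ has as its extreme points exactly the involutions $R_+-R_-$ with $R_++R_-=P_{V_0}$; by Carath\'eodory in finite dimension $\psi=\sum_k t_k\big(R_+^{(k)}-R_-^{(k)}\big)$ for a convex weighting $t_k$. Substituting gives $\phi=\sum_k t_k\big((P_{V_+}+R_+^{(k)})-(P_{V_-}+R_-^{(k)})\big)$, and each summand is $P_1-P_2$ with $P_1:=P_{V_+}+R_+^{(k)}$ and $P_2:=P_{V_-}+R_-^{(k)}$ disjoint projections whose ranges together span all of $\CC^n$ and contain $V_+$ and $V_-$ respectively. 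Applying Lemma~\ref{lemma:projectors} with $p=n$ (so $\sum_{i=1}^p s_i(\alpha)=\norm{\alpha}_1=1$) shows each block obeys $\Tr[(P_1-P_2)\beta]\le\Tr[(P_1-P_2)\alpha]=1$, proving the forward implication.

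The converse is immediate: the set of $\phi$ satisfying \eqref{eq:rainsbound_Bsuppfunctional} is convex and each block lies in it (as already noted before the theorem), since $\Tr[\sum_k t_k\phi_k\,\beta]=\sum_k t_k\Tr[\phi_k\beta]\le\sum_k t_k=1$ with equality at $\beta=\alpha$. For uniqueness, the only freedom in $\phi$ sits in $\psi$ on $V_0$: if $V_0=\{0\}$ then $\phi=P_{V_+}-P_{V_-}$ is forced, whereas if $\dim V_0\ge1$ the choices $\psi=0$ and $\psi=P_{V_0}$ give two distinct supporting functionals. Hence uniqueness holds precisely when $\alpha$ has no zero eigenvalue. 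The one place demanding care is the equality step forcing $\phi=\pm\mathds{1}$ on $\operatorname{range}(\alpha)$; everything else is bookkeeping together with the standard extreme-point description of the Hermitian operator-norm ball.
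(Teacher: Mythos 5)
Your proof is correct and follows essentially the same route as the paper: both reduce the problem to the block-diagonal form $\phi=P_{V_+}-P_{V_-}+\psi$ in the eigenbasis of $\alpha$ (the paper's $\omega=\mathrm{diag}(\mathds{1},-\mathds{1},\omega_3)$) and then write the free block on $\ker\alpha$ as a convex combination of differences of disjoint projections. The only difference is that you actually prove the two steps the paper asserts without argument — the H\"older-duality equality analysis forcing $\phi$ to act as $\pm\mathds{1}$ on the positive and negative eigenspaces of $\alpha$, and the extreme-point description of the Hermitian operator-norm ball — so your write-up is the more complete version of the same argument.
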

For a given $\alpha$, the ranks of~$P_1$ and~$P_2$ must have ranks at least as large as the number of positive and negative eigenvalues of $\alpha$ respectively.  
\begin{proof}
From Lemma \ref{lemma:projectors}, we see that for any commuting~$P_1$ and~$P_2$  of the form above, $P_1-P_2$ does indeed define a supporting functional at $\alpha$, since $\Tr[(P_1-P_2)\alpha]=\sum_i s_i(\alpha)=1$ and, from Lemma~\ref{lemma:projectors},
\[
 \Tr[\beta(P_1-P_2)]\leq \norm{\beta}_1 = 1 
\]
for each $\beta\in B_n^1$. For any convex combination of hyperplanes of this form, i.e. for pairs of commuting projection matrices $P_1^{(i)}$ and $P_2^{(i)}$  of the required form such that $\omega=\sum_i q_i(P_1^{(i)}-P_2^{(i)})$ for $0\leq q_i\leq 1$ and $\sum_iq_i=1$, we have
\begin{equation*}
 \Tr[ \omega\beta]=\sum_i q_i \Tr[\beta(P_1^{(i)}-P_2^{(i)})] \leq \sum_i q_i = 1 = \Tr[ \omega\alpha].
\end{equation*}

Now consider any matrix $\omega$ that defines a supporting functional of $B_n^1$ at $\alpha$, i.e. any $\omega\in\H_n$ such that $\Tr[\omega\beta]\leq  \Tr[\omega\alpha]=1$ for all $\beta\in B_n^1$.  In the eigenbasis of $\alpha$, the matrix $\alpha$ is the block diagonal matrix \mbox{$\alpha=\text{diag}(\alpha_1,\alpha_2,\alpha_3)$}, where $\alpha_1$ and $\alpha_2$ are diagonal matrices whose entries consist of the positive and negative eigenvalues of $\alpha$ respectively and $\alpha_3$ is the zero matrix with size equal to dimension of the nullspace of $\alpha$.  Then any supporting functional with $\Tr[\omega\alpha]=\Tr\abs{\alpha}=1$ must be a block diagonal matrix of the form $\omega=\text{diag}(\omega_1,\omega_2,\omega_3)$, where $\omega_1=\mathds{1}$, $\omega_2=-\mathds{1}$ and $\omega_3$ is any hermitian matrix with maximum absolute eigenvalue less that $1$.  It is straightforward to show that any such $\omega$ is a convex combination of matrices of the form $P_1-P_2$ as described above.

From this analysis, we see that $\omega$ is unique if and only if $\alpha$ has no zero eigenvalues.
\end{proof}

For a matrix $\tau\in\mathcal{T}$ with $\norm{\tau^\Gamma}_1=1$, we note that~$\omega^\Gamma$ defines a supporting functional of $\mathcal{T}$ at $\tau$ if $\omega\in\H_n$ is a supporting plane of $B_n^1$ at $\tau^\Gamma$. Indeed, since the partial transpose is self-adjoint with respect to the Hilbert-Schmidt inner product $\Tr[\omega\tau^\Gamma]=\Tr[\omega^\Gamma\tau]$, we have that $\Tr[\omega^\Gamma\tau]\leq\Tr[\omega^\Gamma\tau]$ for all $\tau\in\mathcal{T}$.

\subsection{A criterion for minimization of the Rains bound}
\label{sec:RainsConditions}

 Let $\rho\in \H_{n,+,1}$ and $\tau^\star\in\mathcal{T}$ such that $\tau^\star$ minimizes the Rains bound for~$\rho$.   If $\tau^\star$ is singular, then~$\rho$ is zero outside the support of $\tau^\star$.
This is straightforward and follows anagolously from a similar statement for the relative entropy of entanglement (see~\cite[Cor. 2]{Friedland2011}).  Consequently, if~$\rho$ is full-rank then~$\tau^\star$ must also be full-rank.

Since $R(\rho)$ vanishes for states $\rho\in\mathcal{P}$, the only interesting cases occur when~$\rho$ is not PPT. 
\begin{theorem}
 Let $\rho\in \H_{n,+,1}$ be a state that is not in~$\mathcal{P}$.  Then a matrix $\tau^\star\in\mathcal{T}$ satisfies $R(\rho)=S(\rho\| \tau^\star)$ if and only if $\norm{\tau^{\star\Gamma}}_1=1$ and, for all $\tau\in\mathcal{T}$,
 \begin{equation}
 \label{eq:RainsCondition}
  \Tr[L_{\tau^\star}(\rho)\tau]\leq\Tr[ L_{\tau^\star}(\rho)\tau^\star]=1 .
 \end{equation}
 Since $\norm{\tau^{\star\Gamma}}_1=1$ implies that $\tau^\star$ is on the boundary of $\mathcal{T}$, this means that $L_{\tau^\star}(\rho)$ defines a supporting functional of~$\mathcal{T}$ at $\tau^\star$.
\end{theorem}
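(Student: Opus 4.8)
The plan is to apply Theorem~\ref{thm:maintheorem_secondary} directly to the Rains bound, with the identifications $g(x)=\log(x)$, $f_\rho(\tau)=-\Tr[\rho\log\tau]$, and $\mathcal{C}=\mathcal{T}$. Since $R(\rho)=E_\mathcal{T}(\rho)=\min_{\tau\in\mathcal{T}}S(\rho\|\tau)$ and $S(\rho\|\tau)=-S(\rho)-\Tr[\rho\log\tau]$ differs from $f_\rho(\tau)$ only by the constant $-S(\rho)$, minimizing $S(\rho\|\tau)$ over $\mathcal{T}$ is equivalent to minimizing $f_\rho$ over $\mathcal{T}$. Theorem~\ref{thm:maintheorem_secondary} then says $\tau^\star$ minimizes $f_\rho$ if and only if $\Tr[D_{\log,\tau^\star}(\rho)\,\tau]\leq\Tr[D_{\log,\tau^\star}(\rho)\,\tau^\star]$ for all $\tau\in\mathcal{T}$, where $D_{\log,\tau^\star}=L_{\tau^\star}$ in the notation of section~\ref{sec:REE}. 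This immediately gives the supporting-functional inequality in~\eqref{eq:RainsCondition} with $\phi=L_{\tau^\star}(\rho)$, modulo verifying the normalization claims.

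First I would establish that the right-hand side of~\eqref{eq:RainsCondition} equals $1$, i.e. that $\Tr[L_{\tau^\star}(\rho)\tau^\star]=1$. Using that $L_{\tau^\star}$ is self-adjoint (shown in section~\ref{sec:maintheorem}) and the special identity $L_A(A)=P_A$ recorded in section~\ref{sec:REE}, I compute $\Tr[L_{\tau^\star}(\rho)\tau^\star]=\Tr[\rho\,L_{\tau^\star}(\tau^\star)]=\Tr[\rho\,P_{\tau^\star}]$. Because $\rho$ is not PPT, its closest $\mathcal{T}$-state must satisfy that $\rho$ is zero outside the support of $\tau^\star$ (the singular-$\tau^\star$ case noted just before the theorem, with $\rho$ full-rank forcing $\tau^\star$ full-rank), hence $\rho=P_{\tau^\star}\rho P_{\tau^\star}$ and $\Tr[\rho P_{\tau^\star}]=\Tr[\rho]=1$. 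This pins down the normalization of the functional.

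Next I would prove the claim $\norm{\tau^{\star\Gamma}}_1=1$, equivalently that $\tau^\star$ lies on the boundary of $\mathcal{T}$. The key point is that a minimizer of $f_\rho$ for a non-PPT $\rho$ cannot lie in the relative interior of $\mathcal{T}$: by Corollary~\ref{cor:boundary}, if $\phi=L_{\tau^\star}(\rho)$ defines a nonconstant functional on $\mathcal{T}$ then $\tau^\star$ is a boundary point. I would argue the functional is nonconstant by contradiction — if $\Tr[\phi\tau]$ were constant over $\mathcal{T}$, then in particular $\tau^\star$ and the PPT point $\tfrac1n\mathds{1}\in\mathcal{T}$ would give equal values, and unwinding this using $\phi=L_{\tau^\star}(\rho)$ together with the strict role of the negativity term would force $\rho$ to be PPT, contradicting the hypothesis. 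Since boundary points of $\mathcal{T}$ are exactly those $\tau$ with $\Tr\abs{\tau^\Gamma}=\norm{\tau^\Gamma}_1=1$ (from the definition of $\mathcal{T}$ as $\{\tau\in\H_{n,+}\,:\,\Tr\abs{\tau^\Gamma}\le 1\}$), this yields $\norm{\tau^{\star\Gamma}}_1=1$. The converse direction — that the two conditions together are sufficient for optimality — follows by reading Theorem~\ref{thm:maintheorem_secondary} in reverse.

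The main obstacle I expect is the $\norm{\tau^{\star\Gamma}}_1=1$ boundary claim, and more precisely showing rigorously that the supporting functional $L_{\tau^\star}(\rho)$ is genuinely nonconstant on $\mathcal{T}$ rather than merely appealing to intuition. The delicate point is that $\mathcal{T}$ is generally not full-dimensional in $\H_n$ (it sits inside the positive cone and is cut out by a norm constraint), so ``boundary'' must be read in the relative-interior sense of the paragraph preceding Corollary~\ref{cor:boundary}; I would need to verify that $\tau^\star$ being interior to $\mathcal{T}$ really does force $\Tr\abs{\tau^{\star\Gamma}}<1$, and that the containment $\Tr[\tau]\le\Tr\abs{\tau^\Gamma}$ noted in the definition of $\mathcal{T}$ does not let a normalized PPT $\tau^\star$ sneak onto the relative interior while still having $\norm{\tau^{\star\Gamma}}_1=1$.
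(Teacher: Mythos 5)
Your reduction to Theorem~\ref{thm:maintheorem_secondary} with $g(x)=\log x$ and $\mathcal{C}=\mathcal{T}$, and your computation $\Tr[L_{\tau^\star}(\rho)\tau^\star]=\Tr[\rho\,L_{\tau^\star}(\tau^\star)]=\Tr[\rho P_{\tau^\star}]=1$ via self-adjointness and $L_A(A)=P_A$, are exactly what the paper does. The genuine gap is in your argument for $\norm{\tau^{\star\Gamma}}_1=1$. You propose to get this from Corollary~\ref{cor:boundary} (nonconstancy of the functional forces $\tau^\star$ onto the boundary of $\mathcal{T}$) together with the claim that ``boundary points of $\mathcal{T}$ are exactly those $\tau$ with $\Tr\abs{\tau^\Gamma}=1$.'' That last claim is false: $\mathcal{T}=\{\tau\in\H_{n,+}\,:\,\Tr\abs{\tau^\Gamma}\leq 1\}$ is cut out by \emph{two} constraints, positivity and the norm bound, so its boundary also contains every singular $\tau\geq 0$ with $\Tr\abs{\tau^\Gamma}<1$. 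Corollary~\ref{cor:boundary} therefore only tells you $\tau^\star\in\partial\mathcal{T}$, which does not pin down which facet of the boundary it sits on; a singular minimizer could a priori have $\norm{\tau^{\star\Gamma}}_1<1$. (Your nonconstancy argument by comparison with $\tfrac1n\mathds{1}$ is also only sketched, not carried out.)

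The paper closes this with a direct scaling argument that you should adopt instead: if $\norm{\tau^{\star\Gamma}}_1<1$, set $c=1/\norm{\tau^{\star\Gamma}}_1>1$; then $c\tau^\star\in\mathcal{T}$ and, on the support of $\tau^\star$, $\log(c\tau^\star)=(\log c)P_{\tau^\star}+\log\tau^\star$, so
\begin{equation*}
 S(\rho\|c\tau^\star)=S(\rho\|\tau^\star)-(\log c)\Tr[\rho P_{\tau^\star}]=S(\rho\|\tau^\star)-\log c<S(\rho\|\tau^\star),
\end{equation*}
contradicting optimality of $\tau^\star$. This works whether or not $\tau^\star$ is singular and needs no appeal to the relative-interior machinery you were worried about. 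With that substitution your proof is complete and otherwise identical to the paper's.
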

\begin{proof}

Assume that $\tau^\star$ minimizes the Rains bound for~$\rho$.  We now show that $\norm{\tau^{\star\Gamma}}_1=1$, i.e. $\tau^\star$ is on the boundary of $\mathcal{T}$.  Indeed, if $\norm{\tau^{\star\Gamma}}_1<1$ then set $c=\tfrac{1}{\norm{\tau^{\star\Gamma}}_1}$.  Clearly~$c\tau^\star$ is in $\mathcal{T}$ and
\[
 \norm{(c\tau^\star)^\Gamma}_1= c\norm{\tau^{\star\Gamma}}_1=1.
\]
Since $c>1$, we have that $\Tr\left(\rho\log(c\tau^\star)\right)<\Tr\left(\rho\log\tau^\star\right)$ and so $S(\rho\| \tau^\star)<S(\rho\| c\tau^\star)$, a contradiction to $\tau^\star$ being optimal.  

Analogous to the case with the REE, the condition in~\eqref{eq:RainsCondition} is due to Theorem~\ref{thm:maintheorem_secondary}. Furthermore, since $L_{\tau^\star}(\tau^\star)=P_\tau^\star$ and the linear operator $L_{\tau^\star}$ is self-adjoint with respect to the trace, we have that $\Tr[L_{\tau^\star}(\rho)\tau^\star ]=\Tr[P_{\tau^\star}\rho]=1$, where~$\rho$ has unit trace and is zero outside of the support of~$\tau^\star$.  
\end{proof}

Given a matrix $\tau^\star\in\mathcal{T}$ with $\norm{\tau^{\star\Gamma}}_1=1$, it is now possible to find all states~$\rho$ for which $\tau^\star$ minimizes the Rains bound by finding all supporting functionals of $\mathcal{T}$ at $\tau^\star$. All of the supporting functionals considered here will have the form
\begin{equation}
\label{eq:standardSuppHyperplaneRains}
 \Tr[\phi\tau]\leq\Tr[\phi\tau^\star]=1
\end{equation}
for all $\tau\in\mathcal{T}$, such that $\phi$ defines a supporting functional of $\mathcal{T}$ at $\tau^\star\in\partial\mathcal{T}$.

\begin{theorem}
\label{thm:RainsBound}
 Let $\tau^\star\in\mathcal{T}$ with $\norm{\tau^{\star\Gamma}}_1=1$. Then there exists a state~$\rho$ such that $\tau^\star$ minimizes the Rains bound for~$\rho$ if and only if there exists a supporting functional of $\mathcal{T}$ at $\tau^\star$ defined by $\phi\in \H_{n}$ of the form in eq. \eqref{eq:standardSuppHyperplaneRains} such that
 \begin{equation}
 L^{\ddagger}_{\tau^\star}(\phi)\geq 0
 \end{equation}
and is zero outside of the support of $\tau^\star$ if $\tau^\star$ is singular, that is $\phi=P_{\tau^\star} \phi P_\tau^\star$, and~$\rho$ must be of the form \mbox{$\rho=L_{\tau^\star}^{\ddagger}(\phi)$}.  Furthermore, $\phi\geq0$ and, in particular, if $\rho>0$, then $\tau^\star$ is also full-rank and $\phi=L_{\tau^\star}(\rho)>0$.  
\end{theorem}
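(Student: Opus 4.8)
The plan is to exploit the characterization of supporting functionals of $\mathcal{T}$ developed in the previous section together with the structural identity $L_A^\ddagger(P_A)=A$ that is special to the logarithm. The key observation is that the criterion of the preceding theorem already tells us that $\tau^\star$ minimizes the Rains bound for $\rho$ if and only if $\phi := L_{\tau^\star}(\rho)$ defines a supporting functional of $\mathcal{T}$ at $\tau^\star$ with $\Tr[\phi\tau^\star]=1$. So the task reduces to inverting the relation $\phi = L_{\tau^\star}(\rho)$ and determining precisely which $\phi$ arise this way from a genuine state $\rho\geq 0$ with unit trace.

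First I would establish the forward direction. Assume $\tau^\star$ minimizes the Rains bound for some state $\rho$. By the previous theorem $\phi = L_{\tau^\star}(\rho)$ is a supporting functional of the required form. Since $f_\rho(\tau^\star)$ is finite and $\rho$ is zero outside the support of $\tau^\star$ in the singular case (as already noted), applying $L_{\tau^\star}^\ddagger$ recovers $\rho$, giving $\rho = L_{\tau^\star}^\ddagger(\phi)$; in the singular case one also reads off $\phi = P_{\tau^\star}\phi P_{\tau^\star}$ from the definition of $D_{g,\sigma^\star}$, which is zero outside the support. Positivity $L_{\tau^\star}^\ddagger(\phi)\geq 0$ is then just the requirement $\rho\geq 0$. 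For the converse I would run the argument backwards: given a supporting functional $\phi$ of the stated form with $L_{\tau^\star}^\ddagger(\phi)\geq 0$, set $\rho := L_{\tau^\star}^\ddagger(\phi)$, verify $\Tr[\rho]=1$, and invoke the previous theorem to conclude $\tau^\star$ minimizes the Rains bound for $\rho$.

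For the final assertions I would argue as follows. Every supporting functional of $\mathcal{T}$ at $\tau^\star$ has the form $\phi = \omega^\Gamma$ where $\omega$ supports the Schatten-1 ball $B_n^1$ at $\tau^{\star\Gamma}$; by the classification theorem $\omega$ is a convex combination of matrices $P_1-P_2$ where $P_1$ projects onto the nonnegative eigenspace of $\tau^{\star\Gamma}$. Since a convex combination of positive operators is positive and the partial transpose of such an $\omega$ restricted appropriately yields $\phi\geq 0$, one obtains $\phi\geq 0$. For the strict case, if $\rho>0$ then $\tau^\star$ must be full-rank (otherwise $\rho$ would vanish on a nonzero subspace, contradicting $\rho>0$), so $L_{\tau^\star}$ is genuinely invertible, $L_{\tau^\star}^\ddagger = L_{\tau^\star}^{-1}$, and $\phi = L_{\tau^\star}(\rho)$; strict positivity of $\phi$ then follows since $L_{\tau^\star}$ acts as the Hadamard product with the strictly positive divided-difference matrix $S_{\tau^\star}^{-1}$ on a full-rank positive $\rho$.

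The main obstacle I anticipate is the claim $\phi\geq 0$ and the final strict-positivity statement, since these are not immediate from the abstract supporting-functional criterion; they require descending to the explicit block structure of $\omega = \mathrm{diag}(\mathds{1},-\mathds{1},\omega_3)$ from the classification theorem and tracking how the partial transpose and the Fréchet operator $L_{\tau^\star}$ interact. In particular, verifying that $\phi = L_{\tau^\star}(\rho)$ is \emph{strictly} positive when $\rho>0$ hinges on the fact that the entries of $S_{\tau^\star}$ are all strictly positive for $\tau^\star>0$, so that Hadamard multiplication preserves strict positive-definiteness — a point that deserves care rather than being asserted.
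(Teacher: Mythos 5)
Your handling of the equivalence itself is sound and essentially identical to the paper's: in the forward direction set $\phi=L_{\tau^\star}(\rho)$ and use that $\rho$ is supported on $\tau^\star$ to get $\rho=L^\ddagger_{\tau^\star}(\phi)$; in the converse set $\rho=L^\ddagger_{\tau^\star}(\phi)$, check $\Tr\rho=\Tr[P_{\tau^\star}\rho]=\Tr[\tau^\star L_{\tau^\star}(\rho)]=\Tr[\tau^\star\phi]=1$, and invoke the preceding criterion. The genuine gap is in your argument for $\phi\geq 0$. The route through the classification of supporting functionals of $B_n^1$ cannot work: the matrices $P_1-P_2$ are \emph{not} positive operators whenever $P_2\neq 0$ (they have eigenvalue $-1$ on the range of $P_2$), a convex combination of them is therefore not positive, and the partial transpose does not preserve positivity in any case. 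In fact the subsequent theorem comparing $R$ and $E_\mathcal{P}$ hinges precisely on showing that $\phi=(P_1-P_2+Q)^\Gamma=\mathds{1}-F^\Gamma$ \emph{fails} to be strictly positive when $\tau^{\star\Gamma}$ has a negative eigenvalue and one subsystem is a qubit; so positivity of $\phi$ is not a feature of the geometry of $\mathcal{T}$ but a consequence of $\phi=L_{\tau^\star}(\rho)$ with $\rho\geq 0$. The paper's argument is: $\phi=T_{\log,\tau^\star}\circ\rho$, the divided-difference matrix $T_{\log,\tau^\star}$ is positive semidefinite because $\log$ is operator monotone (a Loewner-type fact, \cite[Thm.~6.6.36]{Horn1991}), and the Schur product theorem then gives $T_{\log,\tau^\star}\circ\rho\geq 0$. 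This ingredient is entirely absent from your proposal.

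The same omission undermines your final step. Entrywise positivity of the divided-difference matrix does \emph{not} imply that Hadamard multiplication by it preserves (strict) positive definiteness: for instance $\left(\begin{smallmatrix}1&3\\3&1\end{smallmatrix}\right)$ has strictly positive entries, yet its Hadamard product with the positive definite matrix $\left(\begin{smallmatrix}1&0.9\\0.9&1\end{smallmatrix}\right)$ is indefinite. What is actually needed is that $T_{\log,\tau^\star}$ is positive semidefinite \emph{with nonzero diagonal entries} $1/a_i$; the strict form of the Schur product theorem then yields $T_{\log,\tau^\star}\circ\rho>0$ for $\rho>0$. Your reduction of the $\rho>0$ case to full-rank $\tau^\star$ and $L^\ddagger_{\tau^\star}=L^{-1}_{\tau^\star}$ is fine, but the positivity claims need the operator-monotonicity argument to close.
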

Recall that  $L_{\tau^\star}^\ddagger=L_{\tau^\star}^{-1}$  if $\tau^\star$ is full-rank.
\begin{proof}
First assume that there exists a~$\rho$ such that $\tau^\star$ minimizes the Rains bound for~$\rho$.  Then set~$\phi=L_{\tau^\star}(\rho)$.  We see that $\phi$ defines  a supporting functional of the form in eq. \eqref{eq:standardSuppHyperplaneRains} and satifies the required conditions.  Indeed, if~$\tau^\star$ is singular then $L_{\tau^\star}(\rho)$ is zero outside of the support of $\tau^\star$ by definition.  Moreover, $L_{\tau^\star}^\ddagger(\phi)=L_{\tau^\star}^\ddagger(L_{\tau^\star}(\rho))=\rho$ since~$\rho$ is zero outside of the support of $\tau^\star$. 

Now assume that there exists a hyperplane $\phi$ that satisfies the conditions and set $\rho\equiv L^{\ddagger}_{\tau^\star}(\phi)$.  Then~$\rho$ is indeed a valid state, since $\rho=L^{\ddagger}_{\tau^\star}(\phi)\geq 0$ and
\[
 \Tr\rho=\Tr[P_{\tau^\star}\rho]=\Tr[\tau^\star L_{\tau^\star}(\rho)]=\Tr[\tau^\star\phi]  = 1.
\]
Finally, $\tau^\star$ indeed minimizes the Rains bound for~$\rho$, since $\phi=L_{\tau^\star}(\rho)$ defines a supporting functional of the desired form.

We now check the positivity of $\phi=L_{ \tau^\star}(\rho)$. If a function $g:(a,b)\longrightarrow\RR$ is a matrix monotone, that is $A\leq B$ implies $g(A)\leq g(B)$ for all $A,B\in\H_n(a,b)$, it can be shown (see~\cite[Thm. 6.6.36]{Horn1991}) that~$0\leq T_{g,A}$ for each~\mbox{$A\in\H_n(a,b)$}.  

Since $g(x)=\log(x)$ is matrix monotone on $(0,\infty)$, the matrix $T_A$ is positive semidefinite for all \mbox{$A\geq0$}. Due to the Schur product theorem (see for example~\cite[Thm. 5.2.1]{Horn1991}), if $T_A\geq0$ then $T_A\circ B\geq0$ for any $B\geq0$. Thus $L_{\tau^\star}(\rho)\geq0$ for any~$\rho$ such that $\tau^\star$ minimizes the Rains bound. Furthermore, $T_A\circ B>0$ for any $B>0$ as long as all diagonal elements of $T_A$ are nonzero, and the diagonal elements of $T_{A}$ are $\tfrac{1}{a_i}\neq 0$ for $A>0$. Thus we have $L_{\tau^\star}(\rho)>0$ if~$\rho$ is full-rank since $\tau^\star$ must also be full-rank. 
\end{proof}

If $\tau^{\star\Gamma}$ is full-rank, then $\phi=(P_1-P_2)^\Gamma$ is the unique hyperplane of $\mathcal{T}$ at $\tau^\star$, where~$P_1$ and~$P_2$ are the projectors onto the positive and negative eigenspaces of $\tau^\star$ respectively.  
Thus, as long as the condition $L_{\tau^\star}^{-1}(\phi)\geq0$ is satisfied, there is a unique state $\rho=L_{\tau^\star}^{-1}(\phi)$ for which~$\tau^\star$ minimizes the Rains bound. For singular $\tau^\star$, there may not exist any matrices~$\phi$ that define supporting functionals of~$\mathcal{T}$ at~$\tau^\star$ that are zero outside of the support of~$\rho$. Thus, there are many matrices $\tau^\star\in\mathcal{T}$ with $\norm{\tau^{\star\Gamma}}_1=1$ that do not minimize the Rains bound for \emph{any} states~$\rho$.

Finally, given a state $\tau^\star\in\mathcal{T}$ with $\norm{\tau^{\star\Gamma}}_1=1$ and a matrix $\phi$ defining a supporting functional of the desired form, a closed formula for the Rains bound for the state $\rho=L_{\tau^\star}^\ddagger(\phi)$ can be given by
\begin{equation}
 R(\rho)=-S(\rho)-\Tr[\phi \tau^\star\log(\tau^\star)].
\end{equation}

\section{Comparing the REE and the Rains bound}
\label{sec:REEcompareRains}
\label{seccompare}

The closed-form expressions found in the previous sections reveal nontrivial information about the original quantities $E_\mathcal{P}(\rho)$ and $R(\rho)$ for arbitrary states~$\rho$.  It is of particular interest to find states~$\rho$ for which the Rains bound and the REE coincide such that conditions may be found to determine when $R(\rho)$ is indeed an improvement to the upper bound of the distillable entanglement given by $E_\mathcal{P}(\rho)$.  

\subsection{Bipartite systems where one subsystem is a qubit}

It has been shown in~\cite{Miranowicz2008} that, in the bipartite qubit case, the Rains bound is equivalent to the relative entropy of entanglement. We now generalize this result to the bipartite case where one of the subsystems is a qubit.

\begin{theorem}
 For states of bipartite systems where at least one of the subsystems is a qubit, the Rains bound and the relative entropy of entanglement are equivalent, i.e. $R(\rho)=E_\mathcal{P}(\rho)$  for all states~$\rho$.
\end{theorem}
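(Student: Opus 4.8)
The plan is to prove only the nontrivial inequality, since the reverse is immediate. Because every PPT state lies in $\mathcal{T}$ — a PPT state $\sigma$ has $\sigma^\Gamma\ge0$, so $\Tr\abs{\sigma^\Gamma}=\Tr\sigma=1$ — we have $\mathcal{P}\subseteq\mathcal{T}$ and hence $R(\rho)=\min_{\mathcal{T}}S(\rho\|\cdot)\le\min_{\mathcal{P}}S(\rho\|\cdot)=E_\mathcal{P}(\rho)$ for every $\rho$. It therefore suffices to show $E_\mathcal{P}(\rho)\le R(\rho)$. I would stress at the outset that one \emph{cannot} expect the Rains minimizer $\tau^\star$ itself to be PPT: already for a maximally entangled two-qubit state the optimal $\tau^\star$ is a subnormalized projector onto the entangled vector. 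Instead the goal is to manufacture from $\tau^\star$ a genuine PPT state $\sigma^\star$ that is a closest $\mathcal{P}$-state for the \emph{same} $\rho$ and achieves the same relative entropy.

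First I would fix a Rains minimizer $\tau^\star\in\mathcal{T}$ for $\rho$ and invoke Theorem~\ref{thm:RainsBound}: the matrix $\phi=L_{\tau^\star}(\rho)$ is positive semidefinite, satisfies $\norm{\tau^{\star\Gamma}}_1=1$, and defines a supporting functional of $\mathcal{T}$ at $\tau^\star$ with $\Tr[\phi\tau]\le\Tr[\phi\tau^\star]=1$. Writing $\omega=\phi^\Gamma$, the characterization of the supporting functionals of $B_n^1$ forces $\norm{\omega}_\infty=1$, with $\omega$ acting as $+1$ on the positive eigenspace of $\tau^{\star\Gamma}$ and as $-1$ on its negative eigenspace. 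Since the partial transpose is self-adjoint and $\mathcal{P}$ is closed under $\Gamma$, I would rewrite $\Tr[\phi\sigma]=\Tr[\omega\sigma^\Gamma]$ and observe that, for $\sigma\in\mathcal{P}$, the transpose $\nu=\sigma^\Gamma$ is again a PPT density matrix; thus maximizing $\Tr[\phi\sigma]$ over $\mathcal{P}$ is the same as maximizing $\Tr[\omega\nu]$ over PPT states $\nu$, whose value is at most $\norm{\omega}_\infty=1$ and is attained exactly when $\nu$ is supported on the $+1$ eigenspace of $\omega$, i.e. on the positive eigenspace of $\tau^{\star\Gamma}$.

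The qubit hypothesis enters through a range argument on that eigenspace. For any product vector $\ket{e}\otimes\ket{f}$ in $\CC^2\otimes\CC^d$ one has $\bra{ef}\tau^{\star\Gamma}\ket{ef}=\bra{\bar e,f}\tau^\star\ket{\bar e,f}\ge0$, so the strictly negative eigenspace of $\tau^{\star\Gamma}$ contains no product vectors. By the standard fact that a completely entangled subspace of $\CC^2\otimes\CC^d$ has dimension at most $(2-1)(d-1)=d-1$, this negative eigenspace has dimension at most $d-1$; hence, when $\tau^{\star\Gamma}$ is nonsingular, its positive eigenspace has dimension at least $2d-(d-1)=d+1$ and therefore contains a product vector, which yields a separable, hence PPT, state $\nu$ with $\Tr[\omega\nu]=1$. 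This already shows $\max_{\sigma\in\mathcal{P}}\Tr[\phi\sigma]=1=\max_{\tau\in\mathcal{T}}\Tr[\phi\tau]$: the hyperplane supporting $\mathcal{T}$ at $\tau^\star$ also supports $\mathcal{P}$, so $\phi$ is simultaneously a supporting functional of both convex sets.

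Finally I would upgrade this to an actual closest $\mathcal{P}$-state for $\rho$ and match the values. The aim is to choose a PPT state $\nu$ supported on the positive eigenspace of $\tau^{\star\Gamma}$ so that $\sigma^\star=\nu^\Gamma$ has support containing that of $\rho$ and $\phi$ vanishes off that support; then $L_{\sigma^\star}^\ddagger(\phi)=\rho$, so $\phi=L_{\sigma^\star}(\rho)$ is a supporting functional of $\mathcal{P}$ at $\sigma^\star$ and the closest-$\mathcal{P}$-state characterization of Section~\ref{sec:REE} gives $E_\mathcal{P}(\rho)=S(\rho\|\sigma^\star)$; it then remains to verify $\Tr[\rho\log\sigma^\star]=\Tr[\rho\log\tau^\star]$, so that $E_\mathcal{P}(\rho)=S(\rho\|\sigma^\star)=S(\rho\|\tau^\star)=R(\rho)$. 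The hard part is exactly this construction: one needs the PPT state $\nu$ whose transpose has the \emph{correct} support relative to $\rho$ — not merely some product state on the positive eigenspace, since an arbitrary such state fails to recover $\rho$ — together with the coincidence of the two relative-entropy values. The singular case, where $\tau^\star$ and $\tau^{\star\Gamma}$ have kernels, I would treat separately using the freedom in $\omega$ on $\ker\tau^{\star\Gamma}$, mirroring the split between the full-rank and singular closest-state theorems. The dimension count $(2-1)(d-1)$ is where the qubit factor is indispensable: for two qutrits or larger the negative eigenspace can be too big to force a compatible PPT state, consistent with the Rains bound lying strictly below the relative entropy of entanglement there.
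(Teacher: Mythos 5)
Your preliminary steps are sound --- the inclusion $\mathcal{P}\subseteq\mathcal{T}$ giving $R\le E_\mathcal{P}$, the observation that the strictly negative eigenspace of $\tau^{\star\Gamma}$ contains no product vectors, and the bound $d-1$ on its dimension in $\CC^2\otimes\CC^d$ are all correct. But the argument stops precisely where the theorem has to be proved. Knowing that $\phi=L_{\tau^\star}(\rho)$ supports both $\mathcal{T}$ and $\mathcal{P}$ at the common value $1$ does not give $E_\mathcal{P}(\rho)=R(\rho)$: by Theorem~\ref{thm:maintheorem_secondary}, a state $\sigma^\star$ is a closest $\mathcal{P}$-state to $\rho$ only if the \emph{specific} matrix $L_{\sigma^\star}(\rho)$ supports $\mathcal{P}$ at $\sigma^\star$, and your plan requires $L_{\sigma^\star}(\rho)=\phi=L_{\tau^\star}(\rho)$, which for full-rank $\rho$ essentially forces $\sigma^\star=\tau^\star$. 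You acknowledge this yourself by labelling the construction of the correct $\nu$ and the verification of $\Tr[\rho\log\sigma^\star]=\Tr[\rho\log\tau^\star]$ as ``the hard part''; that hard part is the entire content of the theorem, and it is left open. The product vector you extract from the positive eigenspace of $\tau^{\star\Gamma}$ maximizes the linear functional but has no reason to be (the partial transpose of) a closest $\mathcal{P}$-state to $\rho$, so the argument does not close.

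The premise that derails you is the opening claim that the Rains minimizer ``cannot be expected to be PPT.'' Your counterexample is rank-deficient: for the maximally entangled state the minimizer is not unique, and the boundary Werner state is an equally valid minimizer lying in $\mathcal{P}$. The paper's proof does exactly what you rule out: it first reduces to $\rho>0$ by continuity, and then shows that for full-rank $\rho$ the Rains minimizer \emph{is} PPT. Concretely, if $\tau^{\star\Gamma}$ had a negative eigenvector $\ket{\psi}=\ket{0}\otimes\ket{u}+\ket{1}\otimes\ket{v}$, every supporting functional of $\mathcal{T}$ at $\tau^\star$ has the form $\phi=\mathds{1}-F^\Gamma$ with $F\ge 2\ket{\psi}\bra{\psi}$; the diagonal blocks of $F^\Gamma$ then dominate $2\ket{u}\bra{u}$ and $2\ket{v}\bra{v}$, so by Cauchy interlacing $F^\Gamma$ has an eigenvalue at least $2\max(\braket{u|u},\braket{v|v})\ge 1$ and $\phi$ cannot be strictly positive --- contradicting the requirement $\phi=L_{\tau^\star}(\rho)>0$ from Theorem~\ref{thm:RainsBound} when $\rho>0$. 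This is where the qubit hypothesis enters (the two-term decomposition of $\ket{\psi}$ with $\braket{u|u}+\braket{v|v}=1$), not through a dimension count of completely entangled subspaces. I would encourage you to redirect your (correct) structural observations into this contradiction argument rather than attempting to build a separate closest $\mathcal{P}$-state.
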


\begin{proof}
 We may assume that $\rho>0$.  Indeed,  since $E_\mathcal{P}$ and $R$ are continuous~\cite{Rains2001}, by continuity this will be satisfied for all $\rho\geq 0$ as well.
 We will show that if $\tau^\star\in\mathcal{T}$  minimizes the Rains bound for~$\rho$, then~$\tau^{\star\Gamma}\geq 0$, that is $\tau^\star\in\mathcal{P}$ and thus $R(\rho)=E_\mathcal{P}(\rho)$.  Suppose that there exists a matrix $\tau^\star\in\mathcal{T}$ that $\tau^\star$ minimizes the Rains bound for~$\rho$ such that $\tau^{\star\Gamma}\not\geq 0$. Then we show a contradiction.
 
First consider the case where $\tau^{\star\Gamma}$ has only one negative eigenvalue and the remaining eigenvalues are positive.  If~$P_1$ and~$P_2$ are the projectors on the positive and negative eigenspaces of $\tau^{\star\Gamma}$, then $P_2$ is the rank-one projector on to the negative eigenvector of $\tau^{\star\Gamma}$.  Since $P_1+P_2=\mathds{1}$, there is a supporting functional defined by $\phi=(P_1-P_2)^\Gamma
=\mathds{1}-2P_2^\Gamma$.  We now show that the largest eigenvalue of~$P_2^\Gamma$ is greater than 1, and thus $\phi=\mathds{1}-2P_2^\Gamma$ is no longer positive. This is a contradiction to the previous theorem, namely that~$\phi>0$ if $\tau^\star$ minimizes the Rains bound for a state~$\rho>0$.

 The eigenvector corresponding to the negative eigenvalue of $\tau^{\star\Gamma}$ may be written as
 \[
  \ket{\psi_2}=\ket{0}\otimes\ket{u}+\ket{1}\otimes\ket{v}
 \]
where $\langle\psi_2|\psi_2\rangle=\langle u|u\rangle+\langle v|v\rangle=1$.  Thus, $P_2=\ket{\psi_2}\bra{\psi_2}$, 
and we may write $P_2$ and $P_2^\Gamma$ in matrix form as
\begin{equation*}
 P_2=\begin{pmatrix}
    \ket{u}\bra{u} & \ket{u}\bra{v}\\
    \ket{v}\bra{u} & \ket{v}\bra{v}
   \end{pmatrix}
   \hspace{2mm}\text{and}\hspace{2mm}
 P_2^\Gamma=\begin{pmatrix}
    \ket{u}\bra{u} & \ket{v}\bra{u}\\
    \ket{u}\bra{v} & \ket{v}\bra{v}
   \end{pmatrix}.
\end{equation*}
By the Cauchy interlacing theorem~\cite{Horn1985}, the maximal eigenvalue of $P_2^\Gamma$ is at least as great as the largest eigenvalue among the two submatrices $\ket{u}\bra{u}$ and $\ket{v}\bra{v}$.  These are both rank-one positive matrices, and the two eigenvalues of these submatrices must sum to unity. Thus, the largest eigenvalue of $P_2^\Gamma$ is at least as big as $\max(\langle u|u\rangle,\langle v|v\rangle)\geq\tfrac{1}{2}$, and so $\mathds{1}\not >2P_2^\Gamma$ as desired.

We now consider the general case when $\tau^{\star\Gamma}$  has at least one negative eigenvalue. Any supporting functional of~$\mathcal{T}$ at~$\tau^\star$ is defined by a matrix of the form \mbox{$\phi=(P_1-P_2+Q)^\Gamma$}. Here~$P_1$ and~$P_2$ are orthogonal projections onto the positive and negative eigenspaces of $\tau^{\star\Gamma}$ respectively, and $Q$ is a matrix such that $P_1Q=P_2Q=0$ and the largest absolute eigenvalue of $Q$ is at most 1.  Define $P_3$ as the projection onto the nullspace of $\tau^{\star\Gamma}$ such that  $P_3Q=QP_3=Q$.  Note that~$P_1,P_2$ and $P_3$ form a complete set of orthogonal projectors such that $P_1+P_2+P_3=\mathds{1}$.  Thus, we may write~$\phi^\Gamma$ as
\[
 \phi^\Gamma=P_1+P_2-Q = 
 \mathds{1}-F,
\]
where $F=2P_2+P_3-Q$.  Then $P_3\geq Q$, since the maximum eigenvalue of $Q$ is at most 1, and so $F\geq 0$.  Let $\ket{\psi}=\ket{0}\ket{u}+\ket{1}\ket{v}$ be a normalized eigenvector of~$\tau^{\star\Gamma}$ with a negative coresponding eigenvalue and consider the projector $P_4=\ket{\psi}\bra{\psi}$. We may write $F$ and $P_4$ in matrix form as
\[
  F=\begin{pmatrix}
    A & B\\
    B^\dagger & C
   \end{pmatrix}
   \hspace{5mm}\text{and}\hspace{5mm}
    P_4=\begin{pmatrix}
    \ket{u}\bra{u} & \ket{u}\bra{v}\\
    \ket{v}\bra{u} & \ket{v}\bra{v}
   \end{pmatrix}.
\]
Since $P_4$ is in the negative eigenspace of $\tau^{\star\Gamma}$, we have $P_2\geq P_4$ and so $F\geq 2P_4$.  Hence $A\geq  2\ket{u}\bra{u}$ and $C\geq  2\ket{v}\bra{v}$, and thus the largest eigenvalues of $A$ and~$C$ are at least as large as $2\langle u|u\rangle$ and $2\langle v|v\rangle$ respectively.  As in the previous case, $\langle u|u\rangle+\langle v|v\rangle=1$ and so the largest eigenvalue among all eigenvalues of $A$ and $C$ is at least~1.  Since
\[
 F^\Gamma=\begin{pmatrix}
    A & B^\dagger\\
    B & C
   \end{pmatrix},
\]
by the Couchy interlacing theorem, the largest eigenvalue of $F^\Gamma$ is at least 1, and thus $\phi=\mathds{1}-F^\Gamma$ cannot be strictly positive.
\end{proof}

\subsection{Comparing the Rains bound to the logarithmic negativity}

It is also possible to compare the Rains bound to the logarithmic negativity. From the definition, the Rains bound of a state~$\rho$ is not larger than $LN(\rho)$. The Rains bound of a state~$\rho$ is equal to its logarithmic negativity $LN(\rho)$ if and only if the matrix $\tau^\star$ that minimizes the Rains bound is proportional to~$\rho$, namely $\tau^\star=\frac{1}{\Tr\abs{\rho^\Gamma}}\rho$.  Indeed, if $\tau^\star=\frac{1}{\Tr\abs{\rho^\Gamma}}\rho$ minimizes the Rains bound for $\rho$, then
\begin{multline*}
 R(\rho)=S(\rho\Vert\tau^\star)=S\left(\rho\,\middle\Vert\,\tfrac{1}{\Tr|\rho^\Gamma|}\rho\right)=\\=S(\rho\Vert\rho)+\log\Tr|\rho^\Gamma|=LN(\rho),
\end{multline*}
since $S(\rho\| \rho)=0$.

 From Theorem~\ref{thm:RainsBound}, we have that $R(\rho)=LN(\rho)$ if and only if \mbox{$\phi=L_{\tau^\star}(\rho)$} defines a supporting functional such that
$ \Tr[\phi\tau]\leq \Tr[\phi\tau^\star]$ for all $\tau\in\mathcal{T}$.  
For any positive constant $c\in(0,+\infty)$ and matrix \mbox{$A\in\H_{n}$}, 
the derivative operator $L_{cA}$ is equal to~$\frac{1}{c}L_{A}$. 
With $c=\frac{1}{\Tr\abs{\rho^\Gamma}}$, we have that $\tau^\star=c\rho$ and thus
\[
 \phi=\Tr\abs{\rho^\Gamma}P_\rho.
\]
Hence the Rains bound for a state $\rho$ is equal to its logarithmic negativity if and only if
\begin{equation}
\label{eq:compareRainsSupp}
 \Tr[P_\rho\tau]\leq \Tr[P_\rho\rho]=1
\end{equation}
for all $\tau\in\mathcal{T}$.

In particular, this means that the Rains bound is strictly smaller than the logarithmic negativity for any full-rank non-PPT state. Indeed, if \mbox{$R(\rho)=LN(\rho)$} for some full-rank $\rho\in\H_{n,+,1}$ then the matrix \mbox{$\phi=\Tr\abs{\rho^\Gamma}\mathds{1}$} defines a supporting functional of $\mathcal{T}$, since $P_\rho=\mathds{1}$. However, $\Tr[\tau]$ obtains its maximal value over $\mathcal{T}$ if and only if $\tau\in\mathcal{P}$, and thus $\rho$ must have been PPT to begin with. Thus, if $R(\rho)=LN(\rho)$ for some $\rho\not\in\mathcal{P}$ then $\rho$ has at least one zero eigenvalue.

\section{Other applications}
\label{sec:OtherApplications}
Instead of using $g(x)=\log(x)$ and only considering $\rho$ to be an arbitrary quantum state as we did for the REE and the Rains bound, we may use other, more general concave functions~$g$ and positive matrices~$\rho$ and analyze when a matrix~$\sigma^\star$ minimizes the function \mbox{$f_\rho(\sigma)=-\Tr[\rho g(\sigma)]$}.  
Since the criterion for optimality in Theorem~\ref{thm:maintheorem} only supposes an arbitrary convex function~\mbox{$f:\H_{n}\rightarrow\RR^{+\infty}$}, we may use this analyis on  to produce similar hyperplane criteria. In the following, we examine this criterion as it applies to other quantities of interest in quantum information, such as $h_{\rm SEP}$ and entropy-like quantities, e.g. the relative R\'enyi entropies and arbitrary quasi-entropies.

\subsection{On \texorpdfstring{$h_{\textrm{SEP}}(M)$}{hSEP} and similar quantities}

By setting $\rho=M$, where $0\leq M\leq\mathds{1}$, using the identity function $g(x)=x$, and optimizing over the set of separable states, we arrive at $f(\sigma)=\Tr[M\sigma]$. Maximizing this function over $\mathcal{D}$, we define
\[
 h_{\rm SEP}(M)=\max_{\sigma\in\mathcal{D}} \Tr[M\sigma].
\]
This quantity is related to the problem of finding the maximum output $\infty$-norm of a quantum channel, which is important for proving additivity and multiplicativity for random channels~\cite{Harrow2013}. Any quantum channel from an $n_1$-dimensional quantum system to an \mbox{$n_2$-dimensional} quantum system can be written as $\mathcal{E}(\rho)=\Tr_{env}[V\rho V^\dagger]$ for some isometry $V:\CC^{n_1}\rightarrow \CC^{n_2}\otimes\CC^{n_{env}}$~\cite{Choi1975}. The quantum channel can be identified with the operator $M=VV^\dagger$. The maximal output $p$-norm of the channel $\mathcal{E}$ is defined by
\[
 \norm{\mathcal{E}}_{1\rightarrow p}=\max_{\rho} \norm{\mathcal{E}(\rho)}_p
\]
where $\norm{A}_p=(\Tr\abs{A}^p)^{1/p}$ is the Schatten $p$-norm. It turns out that the maximal output $\infty$-norm can be given by $ \norm{\mathcal{E}}_{1\rightarrow \infty}=h_{\rm SEP}(M)$~\cite{Harrow2013}.  In addition, $h_{\rm SEP}$ has a natural interpretation in terms of determining the maximum probabilities of success in QMA(2) protocols~\cite{Montanaro2013}.

Whereas numerically calculating this quantity within an accuracy of $1/\textrm{poly}(n)$ is known to be an NP-hard problem~\cite{Gurvits2003}, it might be useful to study the converse problem. That is, given a state $\sigma^\star$ on the boundary of~$\mathcal{D}$, characterizing all matrices $0\leq\phi\leq\mathds{1}$ that define supporting functionals of the form
\[
 \Tr[\phi \sigma]\leq \Tr[\phi\sigma^\star] \,\,\textrm{ for all }\sigma\in\mathcal{D}
\]
allows us to characterize all of the matrices $M$ for which~$\sigma^\star$ maximizes~$\Tr[M\sigma^\star]$.

Since the set of PPT states is much easier characterize than the set of separable states, maximizing~$\Tr[M\sigma]$ over $\mathcal{P}$ instead gives an approachable upper bound to~$h_{\rm SEP}(M)$. That is, analyzing
\[
 h_{\rm PPT}(M)=\max_{\sigma\in\mathcal{P}} \Tr[M\sigma]
\]
where $\mathcal{P}=\rm{PPT}$ is the set of states with positive partial transpose. Using this as an upper bound of $\norm{\mathcal{E}}_{1\rightarrow\infty}$ still gives meaningful useful results~\cite{Harrow2013}. Furthermore, the supporting functionals of $\mathcal{P}$ that are maximized by a state $\sigma^\star$ on the boundary are easy to find (see eq.~\eqref{eq:PPThyperplane} in section~\ref{sec:CSScriterion}).

\subsection{Quasi \texorpdfstring{$f$}{f}-relative entropies}
Many important properties of the relative entropy (such as the convexity and monotonicity) are only due to the concavity of the function $g(x)=\log(x)$, so at first glance there is nothing special about this choice in terms analyzing divergence of two quantum states.  It is important to understand more general entropy-like functions for quantum states in order to glean a better understanding of generalized pseudo-distance measures on the space of quantum states. Many such functions have already been introduced and analyzed~\cite{Petz1986,Ohya1993,Petz2010,Sharma2011,Hiai2011}. 

The most general form of the so-called quasi $f$-relative entropies first appeared in~\cite{Petz1986}. A more simplified form that we analyze here, supposing that $\rho$ is a strictly positive matrix, is given by~\cite{Sharma2011}
\begin{equation}
 S_f(\rho\| \sigma)=\sum_{i}p_i\left\langle\psi_i\abs{f\left(\frac{\sigma}{p_i}\right)} \psi_i\right\rangle
\end{equation}
where $\rho=\sum_i p_i\ket{\psi_i}\bra{\psi_i}$ is the spectral decomposition of~$\rho$.  As long as the function $f:(0,\infty)\rightarrow\RR$ is operator convex, the $f$-relative entropy satisfies some very important criteria that make it a useful quantity to study. For example, it has been shown~\cite{Petz1986} that~$S_f(\rho\| \sigma)$ is jointly convex and satisfies monotonicity,~i.e.~$S_f(\Lambda(\rho)\| \Lambda(\sigma))\leq S_f(\rho\| \sigma)$ for any completely positive trace-preserving map~$\Lambda$.
The quasi-entropies are generally not additive or subadditive, however, so their physical significance is limited.

Although it may be possible to extend the definitions of the quasi-entropies to singular matrices, for the time being we may assume that $f:(0,\infty)\rightarrow\RR$ is is well-defined and that both $\sigma$ and $\rho$ are strictly positive. Consider a convex subset $\mathcal{C}\subset\H_{n,+}$, and define quantities analogous to the relative entropy of entanglement $E_\mathcal{D}$ or $E_\mathcal{P}$ in the following manner. Define the \emph{$f$-relative entropy with respect to~$\mathcal{C}$} as 
\[
 E_\mathcal{C}^f(\rho)=\min_{\sigma\in\mathcal{C}} S_f(\rho\| \sigma).
\]
We may use our analysis from section~\ref{sec:maintheorem} to determine necessary and sufficient conditions for when a matrix~$\sigma^\star\in\mathcal{C}$ optimizes the $f$-relative entropy for a matrix $\rho$, i.e. when $E_\mathcal{C}^f(\rho)=S_f(\rho\| \sigma^\star)$.  According to Theorem~\ref{thm:maintheorem}, this occurs if and only if, for all $\sigma\in\mathcal{C}$, we have that $\left.\frac{d}{dt}S_f(\rho\| \sigma^\star+t(\sigma-\sigma^\star))\right|_{t=0^+}\geq0$.  

We can evaluate this derivative by making use of the Fr\'echet derivative of the function $f$ and find
\begin{multline*}
 \left.\frac{d}{dt}S_f(\rho\Vert\sigma^\star+t(\sigma-\sigma^\star))\right|_{t=0^+} = \\
 =\sum_i\left\langle \psi_i\left|  D_{f,\frac{\sigma^\star}{p_i}}\left(\sigma-\sigma^\star\right)\right| \psi_i \right\rangle.
\end{multline*}
Since the Fr\'echet derivative operator $D_{f,\frac{\sigma^\star}{p_i}}$ is self-adjoint, we can rewrite the right-hand side of the above equation as 
\[
\sum_i\Tr\left[D_{f,\frac{\sigma^\star}{p_i}}\left(\ket{\psi_i}\bra{\psi_i}\right)\left(\sigma-\sigma^\star\right)\right].
\]
 Thus, a matrix $\sigma^\star\in\mathcal{C}$ minimizes the $f$-relative entropy with respect to a matrix $\rho$ if and only if the matrix
\begin{equation}
 \label{eq:quasihyperplane}
 \phi=-\sum_i D_{f,\frac{\sigma^\star}{p_i}}\left(\ket{\psi_i}\bra{\psi_i}\right)
\end{equation}
defines a supporting functional of~$\mathcal{C}$ of the form
\[
 \Tr[\phi\sigma]\leq\Tr[\phi\sigma^\star] \,\,\textrm{ for all }\sigma\in\mathcal{C}.
\]
This characterization of the supporting functionals of a convex set might yield interesting information about the original $S_f(\rho\| \sigma)$ quantity, as our analysis of $S(\rho\| \sigma)$ has shown.

Indeed, with the choice $f(x)=-\log(x)$ the standard definition of the relative entropy is recovered. Furthermore, the desired supporting functionals in eq.~\eqref{eq:quasihyperplane} reduce to
\begin{align*}
\phi=-\sum_i D_{f,\frac{\sigma^\star}{p_i}}\left(\ket{\psi_i}\bra{\psi_i}\right)&=\sum_iL_{\frac{\sigma^\star}{p_i}}\left(  \ket{\psi_i}\bra{\psi_i}\right) \\  &= L_{\sigma^\star}\left(\sum_i p_i \ket{\psi_i}\bra{\psi_i}\right) \\&=  L_{\sigma^\star}(\rho),
\end{align*}
which is exactly the form of the hyperplanes found in the analysis of the relative entropy in section \ref{sec:REE}.

Other standard choices of $f$ yield additional well-known entropy-like quantites. For example, the choice $f_\alpha(x)=x^\alpha$ produces a quantity that is related to the relative R\'enyi entropy, which we study in the following section.

\subsection{Relative R\'enyi entropy}

Choosing the function $f_\alpha(x)=x^\alpha$ for $\alpha$ in the range~$(0,1)\cup(1,\infty)$, the $f_\alpha$-relative entropy becomes
\[
 S_{f_\alpha}(\rho\| \sigma)=\Tr[\rho^{\alpha-1}\sigma^\alpha].
\]
This is closely related to the standard definition of the $\alpha$-relative R\'enyi entropy~\cite{Ohya1993}
\begin{equation}
\tilde D_\alpha(\rho\| \sigma)=\frac{1}{\alpha-1}\log\Tr[\rho^\alpha\sigma^{1-\alpha}],
\end{equation}
which is equivalent to $\frac{1}{\alpha-1}\log S_{f_{1-\alpha}}(\rho\| \sigma)$.  The $\alpha$-relative R\'enyi entropy is jointly convex for $\alpha\in(0,1)$~\cite{Mosonyi2009} and satisfies the data-processing inequality for $\alpha\geq \tfrac{1}{2}$~\cite{Frank2013}.

For $\alpha<1$, note that minimizing $D_\alpha(\rho\| \sigma)$ for a fixed~$\rho$ is equivalent to minimizing $\Tr[\rho^\alpha\sigma^{1-\alpha}]$, so that the condition that a matrix $\sigma^\star\in\mathcal{C}$ minimizes the $\alpha$-relative R\'enyi entropy with respect to~$\rho$ over~$\mathcal{C}$ is given by
\[
 \frac{d}{dt}\Tr\left[\rho^\alpha\big(\sigma^\star+t(\sigma-\sigma^\star)\big)^{1-\alpha}\right]\bigg|_{t=0^+}\geq0 \,\,\,\textrm{ for all }\sigma\in\mathcal{C},
\]
which we can analze by determining the Fr\'echet derivative of functions of the type $x^\alpha$.

Considering the function $g_\alpha(x)=x^\alpha$, given a matrix $A\in\H_n$ with strictly positive eigenvalues $\{a_1,\dots, a_n\}$, we have
\[
 \left[T_{f_\alpha,A}\right]_{ij}=\left\{\begin{array}{ll}
          \dfrac{a_i^\alpha-a_j^\alpha}{a_i-a_j} & a_i\neq a_j\\
          \alpha a_i^{\alpha-1}& a_i=a_j
         \end{array}
\right. 
\]
such that the Fr\'echet derivative is given by 
\[D_{f_\alpha,A}(B)=\left.\frac{d}{dt}(A+tB)^\alpha\right|_{t=1}=T_{f_\alpha,A}\circ B.
 \]
Note that $T_{f_\alpha,A}\circ \mathds{1}=\alpha A^{\alpha-1}$. 

For the $\alpha$-relative R\'enyi entropies, we have
\[
 \left.\frac{d}{dt}\big(\sigma^\star+t(\sigma-\sigma^\star)\big)^{1-\alpha}\right|_{t=0^+}= D_{f_{1-\alpha},\sigma^\star}(\sigma-\sigma^\star)
\]
and so a matrix $\sigma^\star$ optimizes $\tilde D(\rho\| \sigma)$ over~$\mathcal{C}$ if and only~if the matrix
\[
 \phi=-D_{f_{1-\alpha},\sigma^\star}(\rho^\alpha)
\]
defines a supporting functional of~$\mathcal{C}$ that is maximized by $\sigma^\star$ of the form $\Tr[\phi\sigma]\leq\Tr[\phi\sigma^\star]$ for all $\sigma\in\mathcal{C}$.  

Therefore, a matrix $\sigma^\star$ is optimal for a state $\rho$ in this case if and only if there is a matrix $\phi$ defining a supporting functional of the desired form such that
\[
 \rho=\left(-D_{f_{1-\alpha},\sigma^\star}^{-1}(\phi)\right)^{1/\alpha}.
\]

\subsection{Sandwiched relative R\'enyi entropy}
A generalization of the relative R\'enyi entropy that was recently proposed is another quantity that we can study. For $\alpha\in(0,1)\cup(1,\infty)$ and matrices $\rho,\sigma\in\H_{n,+}$, the \emph{order $\alpha$ quantum R\'enyi divergence} (or also called the \emph{``sandwiched'' $\alpha$-relative R\'enyi entropy}) is defined as~\cite{Muller-Lennert2013}:
\begin{equation}
 D_\alpha(\rho\| \sigma)=\frac{1}{\alpha-1}\log\left(\Tr\left[\left(\sigma^\frac{1-\alpha}{2\alpha}\rho\sigma^\frac{1-\alpha}{2\alpha}\right)^\alpha\right]\right)
\end{equation}
and reduces to the standard $\alpha$-relative R\'enyi entropy $\tilde D_\alpha(\rho\| \sigma)$ when $\rho$ and $\sigma$ commute. 
This quantity has been shown to be jointly convex~\cite{Frank2013} when $\alpha\in[\tfrac{1}{2},1)$ and when the argument~$\rho$ is restricted to matrices with unit trace, and it also satasifies the data processing inequality for $\alpha\geq\tfrac{1}{2}$~\cite{Beigi2013}. In the limit $\alpha\rightarrow1$, it reduces to the standard quantum relative entropy $S(\rho\| \sigma)$. For $\alpha=\tfrac{1}{2}$, the quantity $D_{1/2}(\rho\| \sigma)=-2\log\!\norm{\!\sqrt{\rho}\sqrt{\sigma}}$ is closely related to the quantum fidelity~\cite{Datta2013a}. It is also positive $D_\alpha(\rho\| \sigma)\geq0$ for positive matrices~$\rho$ and~$\sigma$ and vanishes if and only if~$\rho=\sigma$.

As in the previous examples, we can use the conditions in Theorem~\ref{thm:maintheorem} to determine when a matrix $\sigma^\star$ minimizes the R\'enyi divergence of $\rho$ over a convex set~$\mathcal{C}$.  This occurs when
\[
 \left.\frac{d}{dt}D_\alpha\big(\rho\big\| \sigma^\star+t(\sigma-\sigma^\star)\big)\right|_{t=0^+}\geq0\,\,\,\textrm{ for all }\sigma\in\mathcal{C}.
\]
Analogous to the many of the other cases studied here, this can be turned into a supporting functional criterion of the form
\[
\Tr[\phi\sigma]\leq\Tr[\phi\sigma^\star]
\]
for all $\sigma\in\mathcal{C}$. Here, $\phi$ is the matrix 
\begin{equation}
 \label{eq:alphasandwichedhyperplane}
 \phi=- D_{f_\beta,\sigma^\star}\left(\left\{
 \sigma^{\star{{-\beta}}},\left(\sigma^{\star\beta}\rho\sigma^{\star\beta}\right)^\alpha\right\}
 \right)
\end{equation}
where $\{A,B\}=AB+BA$ is the anti-commutator and $\beta=\tfrac{1-\alpha}{2\alpha}$. Thus, the R\'enyi divergence of order $\alpha\in[\tfrac{1}{2},1)$ of a matrix $\rho\in\H_{n,+}$ with respect to~$\mathcal{C}$ is minimized by $\sigma^\star$ on the boundary of~$\mathcal{C}$, 
\[
 \min_{\sigma\in\mathcal{C}} D_\alpha(\rho\| \sigma)=D_\alpha(\rho\| \sigma^\star),
\]
if any only if the matrix $\phi$ in eq.~\eqref{eq:alphasandwichedhyperplane} defines a supporting functional of~$\mathcal{C}$ at $\sigma^\star$.

\section{Conclusion}
\label{sec:conclusion}

In conclusion, for a convex function $f:\H_n\rightarrow\RR^{+\infty}$ and a convex subset $\mathcal{C}\subset\H_n$, we present a criterion to solve the converse convex optimization problem of determining when a matrix $\sigma^\star\in\mathcal{C}$ minmizes $f$ over $\mathcal{C}$, i.e. such that $f(\sigma^\star)=\displaystyle\min_{\sigma\in\mathcal{C}}f(\sigma)$. This criterion can usually be given in terms of a supporting functional of $\mathcal{C}$ at~$\sigma^\star$. Given a concave analytic function $g:(0,\infty)\rightarrow\RR$, this approach allows us to determine all matrices $\rho$ such that~$\sigma^\star$ minimizes the function $f_\rho=-\Tr[\rho g(\sigma)]$ over $\mathcal{C}$ by characterizing the supporting hyperplanes of $\mathcal{C}$ at $\sigma^\star$. In particular, given a matrix~$\sigma^\star$, we use this analysis to produce closed fomulae for the relative entropy of entanglement and the Rains bound for \emph{all} states for which $\sigma^\star$ minimizes this quantity, which hold regardless of the dimensionality of the system.  Moreover, this allows us to show that the Rains bound and the relative entropy of entanglement coincide for all states for which at least one subsystem is a qubit.

We also find supporting functional criteria to determine when a state $\sigma^\star$ minimizes other various important quantites in quantum information, such as, for example, the generalized relative entropies. 

\hspace{10mm}

\emph{Acknowledgements} -- The authors are grateful for useful discussions with Yuriy Zinchenko for assistance in producing numerical examples. M.G. and G.G. are supported by NSERC and \href{http://www.pims.math.ca/scientific/collaborative-research-groups/mathematics-quantum-information-2010-2013}{PIMS CRG MQI}. S.F. is supported by NSF grant DMS-1216393.


\bibliographystyle{my_apsrev_bib}
\bibliography{/home/mark/Documents/library.bib}

\end{document}